\DeclareRobustCommand{\amalg}{\mathop{\text{\fakecoprod}}}
\newcommand{\fakecoprod}{%
  \sbox0{$\prod$}%
  \smash{\raisebox{\dimexpr.9625\depth-\dp0}{\scalebox{1}[-1]{$\prod$}}}%
  \vphantom{$\prod$}%
}
\numberwithin{equation}{section}
\theoremstyle{plain}
\newtheorem{theorem}{Theorem}[section]
\newtheorem{lemma}[theorem]{Lemma}
\newtheorem{proposition}[theorem]{Proposition}
\theoremstyle{definition}
\newtheorem{definition}[theorem]{Definition}
\theoremstyle{remark}
\newenvironment{bprooftree}
  {\leavevmode\hbox\bgroup}
  {\DisplayProof\egroup}
\newcommand{\id}{\text{id}}
\newcommand{\VV}{\ensuremath{\mathbf{V}}}
\newcommand{\VVE}{\ensuremath{\mathscr{V}}}
\newcommand{\AAE}{\ensuremath{\mathscr{A}}}
\newcommand{\BBE}{\ensuremath{\mathscr{B}}}
\newcommand{\CC}{\ensuremath{\mathbf{C}}}
\newcommand{\CCE}{\ensuremath{\mathscr{C}}}
\newcommand{\DCPO}{\ensuremath{\mathbf{CPO}}}
\newcommand{\dcpoe}{\ensuremath{\mathcal{CPO}}}
\newcommand{\dcpo}{\DCPO}
\newcommand{\dcpobs}{\ensuremath{\mathbf{CPO}_{\perp!}}}
\newcommand{\dcpobse}{\ensuremath{\mathcal{CPO}_{\perp!}}}
\newcommand{\cpo}{\dcpo}
\newcommand{\M}{\ensuremath{\mathbf{M}}}
\newcommand{\ME}{\ensuremath{\mathcal{M}}}
\newcommand{\Set}{\ensuremath{\mathbf{Set}}}
\newcommand{\Cat}{\ensuremath{\mathbf{Cat}}}
\newcommand{\lift}{\text{lift}}
\newcommand{\lleft}{\text{left}}
\newcommand{\rright}{\text{right}}
\newcommand{\force}{\text{force}}
\newcommand{\ccase}{\text{case}}
\newcommand{\llet}{\text{let}}
\newcommand{\bbox}{\text{box}}
\newcommand{\Diag}{\text{Diag}}
\newcommand{\apply}{\text{apply}}
\newcommand{\ev}{\text{ev}}
\newcommand{\rec}{\text{rec}}
\DeclareMathAlphabet{\mathpzc}{OT1}{pzc}{m}{it}
\newcommand{\kay}{\ensuremath{\mathpzc{k}}}
\newcommand{\vkay}{\vec{\kay}}
\newcommand{\vell}{\vec{\ell}}
\newcommand{\lrb}[1]{{\llbracket #1 \rrbracket}}
\newcommand{\sem}[1]{{\llbracket #1 \rrbracket}}
\newcommand{\sleq}{\sqsubseteq}
\newcommand{\tleq}{\trianglelefteq}
\newcommand{\blift}{\textbf{lift}}
\newcommand{\naturalto}{%
  \mathrel{\vbox{\offinterlineskip
    \mathsurround=0pt
    \ialign{\hfil##\hfil\cr
      \normalfont\scalebox{1.2}{.}\cr
      $\longrightarrow$\cr}
}}%
}
\tikzstyle{braceedge}=[decorate,decoration={brace,amplitude=10pt}]
\tikzstyle{square box}=[rectangle,fill=white,draw=black,minimum height=6mm,minimum width=6mm,yshift=0.7mm]
\tikzstyle{wire label}=[font=\footnotesize, auto,swap]
\tikzstyle{none}=[inner sep=0pt]
\tikzstyle{gn}=[circle,fill=Lime,draw=Black,line width=0.8 pt]
\tikzstyle{rn}=[circle,fill=Red,draw=Black, line width=0.8 pt]
\tikzstyle{H}=[rectangle,fill=Yellow,draw=Black]
\tikzstyle{line}=[scalar,fill=White,draw=Black]
\tikzstyle{io}=[rectangle,fill=White,draw=Black]
\tikzstyle{block}=[rectangle,fill=Orange,draw=Black]
\tikzstyle{graph}=[circle,fill=White,draw=Black]
\tikzstyle{empty}=[rectangle,fill=none,draw=none]
\tikzstyle{box}=[rectangle,fill=White,draw=Black]
\tikzstyle{dot}=[circle,fill=Black,draw=Black,inner sep=0pt,minimum size=1pt]
\tikzstyle{small dot}=[circle,fill=Black,draw=Black,inner sep=0pt,minimum size=1pt]
\tikzstyle{Dot}=[circle,fill=Black,draw=Black,inner sep=0pt,minimum size=3pt]
\tikzstyle{diam}=[rectangle,fill=Black,draw,yscale=1.2,rotate=45]
\tikzstyle{gangle}=[rectangle,fill=Lime,draw=Black]
\tikzstyle{rangle}=[rectangle,fill=Red,draw=Black]
\tikzstyle{circ}=[circle,fill=none,draw=Black,scale=1.3]
\tikzstyle{ellip}=[ellipse,fill=none,draw=Black,scale=1.3,minimum width =1.3cm]
\tikzstyle{ellip2}=[ellipse,fill=White,draw=Black,scale=1.3,minimum width =3cm]
\tikzstyle{bbox}=[rectangle,fill=Blue,draw=Blue,scale=0.6]
\tikzstyle{gg}=[shape=rectangle,fill=White,draw=Black,dashed]
\tikzstyle{white circle}=[circle,fill=none,draw=Black,scale=1]
\tikzstyle{black circle}=[circle,fill=Black,draw=Black,scale=1]
\tikzstyle{grey circle}=[circle,fill=Gray,draw=Black,scale=1]
\tikzstyle{white rectangle}=[rectangle,fill=none,draw=Black,scale=1]
\tikzstyle{nodev}=[circle,fill=none,draw=Black,scale=1]
\tikzstyle{greynode}=[circle,fill=Grey,draw=Black,scale=1]
\tikzstyle{blacknode}=[circle,fill=Black,draw=Black,scale=1]
\tikzstyle{wirev}=[circle,fill=Black,draw=Black,inner sep=0pt,minimum size=3pt]
\tikzstyle{wirevred}=[circle,fill=Red,draw=Black,inner sep=0pt,minimum size=3pt]
\tikzstyle{simple}=[-,draw=Black]
\tikzstyle{directed}=[->,draw=Black]
\tikzstyle{bdirected}=[<->,draw=Black]
\tikzstyle{bothdirs}=[bdirected,draw=Black]
\tikzstyle{bothdirsred}=[bdirected,draw=Red]
\tikzstyle{blue}=[-,draw=Blue]
\tikzstyle{redd}=[directed,draw=Red]
\tikzstyle{redu}=[-,draw=Red]
\tikzstyle{blued}=[directed,draw=Blue]
\tikzstyle{dash}=[dashed,draw=Black]
\tikzstyle{ddash}=[->,dashed,draw=Black]
\tikzstyle{dashedd}=[->,dashed]
\tikzstyle{dashedred}=[dashed,draw=Red]
\tikzstyle{dotpic}=[scale=0.5]
\tikzstyle{every picture}=[baseline=-0.25em]
\newcommand{
\InputIfFileExists{}{}{\input{./tikz/}}
}[1]{
\InputIfFileExists{#1}{}{\input{./tikz/#1}}
}
\newcommand{\InputIfFileExists{}{}{\input{./tikz/}}}[1]{\InputIfFileExists{#1}{}{\input{./tikz/#1}}}
\newcommand{\stikz}[2][1]{\scalebox{#1}{
\InputIfFileExists{#2}{}{\input{./tikz/#2}}
}}
\newcommand{\cstikz}[2][1]{\begin{center}\stikz[#1]{#2}\end{center}}
\title{Enriching a Linear/Non-linear Lambda Calculus:\\ A Programming Language for String Diagrams}
\author{
  Bert Lindenhovius
  \institute{Department of Computer Science\\Tulane University}
\and
  Michael Mislove
  \institute{Department of Computer Science\\Tulane University}
\and
  Vladimir Zamdzhiev
  \institute{Department of Computer Science\\Tulane University}
}
\begin{document}
\maketitle
\begin{abstract}
Linear/non-linear (LNL) models, as described by Benton, soundly model a LNL
term calculus and LNL logic closely related to intuitionistic linear logic.
Every such model induces a canonical enrichment that we show soundly models
a LNL lambda calculus for string diagrams, introduced by Rios and Selinger
(with primary application in quantum computing). Our abstract treatment of this
language leads to simpler concrete models compared to those presented so far.
We also extend the language with general recursion and prove soundness.
Finally, we present an adequacy result for the diagram-free fragment of the
language which corresponds to a modified version of Benton and Wadler's adjoint
calculus with recursion.

\end{abstract}
\section{Introduction}\label{sec:introduction}
In recent years string diagrams have found applications across a range 
of areas in computer science and related fields: in concurrency theory, where they 
are used to model Petri nets~\cite{petri-nets}; in systems theory, where they are used 
in a calculus of signal flow diagrams~\cite{signal-flow}; and in quantum 
computing~\cite{zw-calculus,zx-calculus}  where they represent quantum circuits and have been 
used to completely axiomatize the Clifford+T segment of quantum mechanics~\cite{zx-complete}.

But as the size of a system grows, constructing string diagram representations by hand
quickly becomes intractable, and more advanced tools are needed to accurately represent and
reason about the associated diagrams.  In fact, just generating large diagrams is
a difficult problem. One area where this has been addressed is in the development of
circuit description languages. For example, Verilog~\cite{verilog} and 
VHDL~\cite{vhdl} are popular hardware description languages that are 
used to generate very large digital circuits. More recently, the PNBml
language~\cite{pawel-programming} was developed to generate
Petri nets, and  Quipper~\cite{quipper} and QWIRE~\cite{qwire} are 
quantum programming languages (among others) that are used to 
generate (and execute) quantum circuits. 

In this paper we pursue a more abstract approach. We consider a lambda
calculus for string diagrams whose primary purpose is to generate complicated 
diagrams from simpler components. However, we do not fix a particular application domain. 
Our development only assumes that the string diagrams we are working with enjoy a 
symmetric monoidal structure. Our goal is to help lay a foundation for programming 
languages that generate string diagrams, and that support the addition of 
extensions for specific application domains along with the necessary language features. 

More generally, we believe the use of formal methods could aid us in obtaining
a better conceptual understanding of how to design languages that can be used
to construct and analyze large and complicated (families) of string diagrams.

\paragraph{Our Results.}
We study several calculi in this paper, beginning with the \emph{combined LNL}
(CLNL) calculus, which is the diagram-free fragment of our main language. The
CLNL calculus, described in Section~\ref{sec:clnl}, can be seen as a modified
version of Benton's LNL calculus,  first defined in~\cite{benton-small}. The
crucial difference is that in CLNL we allow the use of mixed contexts, so there
is only one type of judgement. This reduces the number of typing
rules, and allows us to extend the language to support the generation of
string diagrams. We also present a categorical model for our language, which 
is given by an LNL model with finite coproducts, and prove its soundness.

Next, in Section~\ref{sec:eclnl}, we describe our main language of interest,
the \emph{enriched CLNL} calculus, which we denote ECLNL. The ECLNL calculus 
adopts the syntax and operational semantics of Proto-Quipper-M, a circuit description 
language introduced by Rios and Selinger~\cite{pqm-small}, but we develop our own
categorical model. Ours is the first \emph{abstract} categorical model for the
language, which is again given by an LNL model, but endowed with an additional
\emph{enrichment} structure. The enrichment is the reason we chose to rename the
language. By design, ECLNL is an extension of the CLNL calculus that adds language
features for manipulating string diagrams. We show that our abstract
model satisfies the soundness and constructivity requirements  (see~\cite{pqm-small}, 
Remark 4.1) of Rios and Selinger's original model. As special instances of our abstract model, 
we recover the original model of Rios and Selinger, and we also present a simpler concrete
model, as well as one that is order enriched.

In Section~\ref{sec:recursion} we resolve the open problem posed by Rios and
Selinger of extending the language with general recursion. We show that all the
relevant language properties are preserved, and then we prove soundness for both
the CLNL and ECLNL calculi with recursion, after first extending our abstract models with some
additional structure. We then present concrete models for the ECLNL calculus
that support recursion and also support generating string diagrams from
\emph{any} symmetric monoidal category. We conclude the section with a
concrete model for the CLNL calculus extended with recursion that we also
prove is computationally adequate at intuitionistic types.

In Section~\ref{sec:conclusion}, we conclude the paper and discuss further
possible developments, such as adding inductive and recursive types, as well as
a treatment of dependent types. 

\paragraph{Related Work.} 
Categorical models are fundamental for our results, and the ones we present
rely on the LNL models first described by Benton in~\cite{benton-small}. 
Our work also is inspired by the language Proto-Quipper-M~\cite{pqm-small} by Rios and 
Selinger, the latest of the circuit description languages Selinger and his group have been 
developing. Our ECLNL calculus has the same syntax and operational semantics as 
Proto-Quipper-M, but there are significant differences in the denotational models. Rios 
and Selinger start with a symmetric monoidal category $\M$, then they consider a
fully faithful strong symmetric monoidal embedding of $\M$ into another category
$\overline \M$ that has some suitable categorical structure (e.g. 
$\overline \M := [\M^{\text{op}}, \Set]$), so that the category $\mathbf{Fam}(\overline \M)$ is 
symmetric monoidal closed and contains $\M$. Their model is then
given by the symmetric monoidal adjunction between $\Set$ and $\mathbf{Fam}(\overline \M),$
which allows them to distinguish
``parameter" (intuitionistic) terms and ``state" (linear) terms. They show their language is 
type safe, their semantics is sound, and they remark that it also is computationally adequate at 
observable types (there is no recursion, so all programs terminate). The semantics for our ECLNL 
calculus enjoys the same properties, but we present both an abstract model and a simpler concrete
model that doesn't involve a $\mathbf{Fam}(-)$ construction. Moreover, we also describe an extension 
with recursion, based on ideas by Benton and Wadler~\cite{benton-wadler},
and present an adequacy result for the diagram-free fragment of the language.

QWIRE~\cite{qwire} also is a language for reasoning about quantum circuits. 
QWIRE is really two languages, an intuitionistic host language and a quantum circuits language.
QWIRE led Rennela and Staton to consider a more general language
Ewire~\cite{ewire-mfps,ewire-arxiv}, which can be used to describe circuits
that are not necessarily quantum.  Ewire supports dynamic lifting, and they
prove a soundness result assuming the reduction system for the intuitionistic
language is normalizing.  They also discuss extending Ewire with conditional
branching  and inductive types over the $\otimes$- and $\oplus$-connectives
(but not $\multimap$). However, these extensions require imposing additional
structure on the diagrams, such as the existence of coproducts and
fold/unfold gates.  In our approach,
we assume only that the diagrams enjoy a symmetric monoidal structure.  In
addition, our language also supports general recursion, whereas Ewire does not.
An important similarity is that Ewire also makes use of enriched category
theory to describe the denotational model.

Aside from Ewire and Proto-Quipper-M, the other languages we mentioned
cannot generate arbitrary string diagrams, and some of them do not have a
formal denotational semantics.
\section{An alternative LNL calculus}\label{sec:clnl}
LNL models were introduced by Benton \cite{benton-small} as a means to soundly model
an interesting LNL calculus together with a corresponding logic. The goal was to understand
the relationship between intuitionistic logic and intuitionistic linear logic. In this
section, we show that LNL models also soundly model a variant of the LNL
calculus where, instead of having two distinct typing judgements (linear and
intuitionistic), there is a single type of judgement whose context is allowed to be mixed. A similar 
idea was briefly discussed by Benton in his original
paper \cite{benton-small}. The syntax and operational semantics for this language
are derived as a special case of the language of Rios and
Selinger~\cite{pqm-small}. We denote the resulting language by CLNL, which we 
call the "Combined LNL" calculus.

As with the other
calculi we consider, we begin our discussion by first describing a categorical model for CLNL. 
This makes the presentation of the language easier to follow. 
A categorical model of the CLNL calculus is given by an LNL model with finite coproducts, as
the next definition shows.

\begin{definition}[\cite{benton-small}]\label{def:lnl}
A \emph{model of the CLNL calculus} (CLNL model) is given by the following data:
a cartesian closed category (CCC) with finite coproducts $(\VV, \times, \to, 1, \amalg, \varnothing)$;
a symmetric monoidal closed category (SMCC) with finite coproducts $(\CC, \otimes, \multimap, I, +, 0)$;
and a symmetric monoidal adjunction:
  \[
  \begin{tikzcd}[ampersand replacement=\&]
    \VV \arrow[rr, bend left, "F"] \& \rotatebox{90}{$\vdash$} \& \CC \arrow[ll, bend left, "G"]
  \end{tikzcd}
  \]
We also adopt the following notation:
\begin{itemize}
\item The comonad-endofunctor is $! := F \circ G$.
\item The unit of the adjunction $F \dashv G\ $ is $\eta : \text{Id} \naturalto G \circ F.$
\item The counit of the adjunction $F \dashv G\ $ is $\epsilon :\ !  \naturalto \text{Id}.$
\end{itemize}
\end{definition}
Throughout the remainder of this section, we consider an arbitrary, but fixed, CLNL
model. The CLNL calculus, which we introduce next, is
interpreted in the category $\CC.$

The syntax of the CLNL calculus is presented in Figure~\ref{fig:syntax}. It is exactly the diagram-free fragment
of the ECLNL calculus, and because of space reasons, we only show
the typing rules for ECLNL. However, the typing rules of the CLNL calculus can
be easily derived from those for ECLNL by ignoring the $Q$ label contexts (see the (pair)
rule example below). Of course, ECLNL has some additional terms not in
CLNL, so the corresponding typing rules should be ignored as well.

Observe that the intuitionistic types are a subset of the
types of our language. Note also that there is no grammar which defines linear
types. We say that a type that is not intuitionistic is \emph{linear}.
This definition is strictly speaking not necessary, but it helps to illustrate 
some concepts. In particular, any type $A \multimap B$ is
therefore considered to be linear, even if $A$ and $B$ are intuitionistic.
The interpretation of a type $A$ is an object $\lrb{A}$ of $\CC,$ defined
by induction in the usual way (Figure~\ref{fig:semantics}).

Recall that in an LNL model with coproducts, we have:
\[I \cong F(1); \qquad 0 \cong F(\varnothing);\]
\[F(X) \otimes F(Y) \cong F(X \times Y); \qquad F(X) + F(Y) \cong F(X \amalg Y)\]
because $F$ is strong (symmetric) monoidal and also a left adjoint.
Then a simple induction argument shows:
\begin{proposition}
For every intuitionistic type $P$, there is a canonical isomorphism $\lrb P \cong F(X).$
\end{proposition}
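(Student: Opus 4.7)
The plan is to prove this by structural induction on the grammar of intuitionistic types, using the four canonical isomorphisms displayed immediately before the statement as the inductive steps. Although the grammar of intuitionistic types is not displayed in the excerpt, the standard grammar for an LNL calculus of this kind consists of a unit type, an empty type, tensor/product of intuitionistic types, sum/coproduct of intuitionistic types, the modal type $!A$ for an arbitrary type $A$, and possibly an intuitionistic function space $P \to Q$ between intuitionistic types; the same proof strategy handles all of these uniformly.

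For the base cases, I would use $\lrb{1} = I \cong F(1)$ and $\lrb{0} = 0 \cong F(\varnothing)$ directly from the displayed isomorphisms. For the tensor/product case, the induction hypothesis gives $\lrb{P_1} \cong F(X_1)$ and $\lrb{P_2} \cong F(X_2)$, and then the isomorphism $F(X_1) \otimes F(X_2) \cong F(X_1 \times X_2)$, which holds because $F$ is a strong symmetric monoidal functor, supplies the required witness with $X := X_1 \times X_2$. The sum case is analogous, using $F(X_1) + F(X_2) \cong F(X_1 \amalg X_2)$, which holds because $F$ is a left adjoint and hence preserves coproducts. For the modality case $!A$, we have $\lrb{!A} = {!}\lrb{A} = F(G(\lrb{A}))$ by definition, so we take $X := G(\lrb{A})$ without needing the induction hypothesis. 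If the grammar allows intuitionistic function types $P \to Q$, then the induction hypothesis yields $\lrb{P} \cong F(X_1)$ and $\lrb{Q} \cong F(X_2)$, and the semantics is arranged (as is standard) so that $\lrb{P \to Q} \cong F(X_1 \to X_2)$, using the cartesian closed structure of $\VV$.

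No step presents a genuine obstacle: the four isomorphisms listed in the excerpt, together with the fact that $!A$ is by definition in the image of $F$, cover every clause of the grammar. The only subtlety worth emphasizing is that the isomorphisms must be assembled coherently, so that the \emph{canonical} nature of the isomorphism is preserved along the induction; this follows automatically from the fact that each constituent isomorphism is itself canonical (given by the strong monoidal structure of $F$, by $F$'s preservation of coproducts, or by the definition of $!$).
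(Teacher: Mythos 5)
Your proof is correct and follows exactly the induction the paper intends: the actual grammar of intuitionistic types here is $0 \mid P+R \mid I \mid P\otimes R \mid {!A}$, and your base cases, tensor/sum steps via the displayed isomorphisms, and the observation that $\lrb{!A} = F(G(\lrb A))$ cover all of it (the intuitionistic function space you hedge about is not in the grammar, so that clause is simply unnecessary). This matches the paper's ``simple induction argument.''
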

A \emph{context} is a function from a finite set of variables to types.
We write contexts as $\Gamma = x_1 : A_1, x_2 : A_2, \ldots, x_n : A_n$, where the $x_i$ are
variables and $A_i$ are types.
Its interpretation is as usual $\lrb \Gamma = \lrb{A_1} \otimes \cdots \otimes \lrb{A_n}.$
A variable in a context is intuitionistic
(linear) if it is assigned an intuitionistic (linear) type. A context that
contains only intuitionistic variables is called an \emph{intuitionistic
context}. Note, that we do not define linear contexts, because our typing rules
refer only to contexts that either are intuitionistic  or arbitrary (mixed).

A typing judgement has the form $\Gamma \vdash m: A$, where $\Gamma$ is an
(arbitrary) context, $m$ is a term and $A$ is a type. Its interpretation
is a morphism $\lrb{\Gamma \vdash m: A}: \lrb \Gamma \to \lrb A$ in $\CC,$
defined by induction on the derivation. For the typing rules of CLNL, the label contexts $Q, Q'$, etc.\ from Figure~\ref{fig:syntax} should be ignored. For example, the (pair) rule in CLNL
becomes:
\[
\begin{bprooftree}
\AxiomC{$\Phi, \Gamma_1 \vdash m : A$}
\AxiomC{$\Phi, \Gamma_2 \vdash n : B$}
\RightLabel{(pair)} \BinaryInfC{$\Phi, \Gamma_1, \Gamma_2 \vdash \langle m, n \rangle : A \otimes B$}
\end{bprooftree}
\]
The type system enforces that a linear variable is used exactly
once, whereas a non-linear variable may be used any number of times, including
zero. Unlike Benton's LNL calculus, derivations in CLNL are in general not unique,
because intuitionistic variables may be part of an arbitrary context $\Gamma$.
For example, if $P_1$ and $P_2$ are intuitionistic types, then:
\[
\begin{bprooftree}
\AxiomC{$x:P_1 \vdash x: P_1$}
\AxiomC{$y:P_2 \vdash y: P_2$}
\RightLabel{(pair)} \BinaryInfC{$x:P_1, y:P_2 \vdash \langle x, y \rangle: P_1 \otimes P_2$}
\end{bprooftree}
\]
\mbox{}
\[
\begin{bprooftree}
\AxiomC{$x:P_1 \vdash x: P_1$}
\AxiomC{$x:P_1, y:P_2 \vdash y: P_2$}
\RightLabel{(pair)} \BinaryInfC{$x:P_1, y:P_2 \vdash \langle x, y \rangle: P_1 \otimes P_2$}
\end{bprooftree}
\]
are two different derivations of the same judgement. While this might seem to be a disadvantage,
it leads to a reduction in the number of rules, it allows a language extension that supports 
describing string diagrams (cf. Section~\ref{sec:eclnl}), 
and it allows us to easily add general recursion (cf. Section~\ref{sec:recursion}).
Moreover, the interpretation
of any two derivations of the same judgement are equal (cf. Theorem~\ref{thm:derivations}).

\begin{definition}
A morphism $f: \lrb{P_1} \to \lrb{P_2}$ is called \emph{intuitionistic}, if
\[f = \lrb{P_1} \xrightarrow{\cong} F(X) \xrightarrow{F(f')} F(Y) \xrightarrow{\cong} \lrb{P_2},\] for
some $f' \in \VV(X,Y).$
\end{definition}

\newpage

\begin{definition}
We define maps on  intuitionistic types as follows:
\begin{itemize}
\item[]\begin{enumerate}
  \item[\emph{Discard:}] $\diamond_P := \lrb{P}\xrightarrow{\cong} F(X) \xrightarrow{F(1_X)} F(1)\xrightarrow{\cong} I;$
  \item[\emph{Copy:}] $\Delta_P := \lrb{P}\xrightarrow{\cong} F(X) \xrightarrow{F(\langle \id, \id \rangle)} F(X \times X)\xrightarrow{\cong} \lrb P \otimes \lrb P;$
  \item[\emph{Lift:}] $\blift_P := \lrb{P}\xrightarrow{\cong} F(X) \xrightarrow{F(\eta_X)}\, !F(X)\xrightarrow{\cong}\ !\lrb P.$
\end{enumerate}
\end{itemize}
\end{definition}
\begin{proposition}
If $f: \lrb {P_1} \to \lrb{P_2}$ is intuitionistic, then:
\begin{itemize}
\item $\diamond_{P_2} \circ f = \diamond_{P_1};$
\item $\Delta_{P_2} \circ f = (f \otimes f) \circ \Delta_{P_1};$
\item \blift$_{P_2} \circ f =\ !f \circ$ \blift$_{P_1}.$
\end{itemize}
\end{proposition}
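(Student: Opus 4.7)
The plan is to unfold each of the three equations using the definition of intuitionistic morphism, then reduce to basic facts in $\VV$ via functoriality of $F$ and naturality. Write $f = \lrb{P_1} \xrightarrow{\cong} F(X) \xrightarrow{F(f')} F(Y) \xrightarrow{\cong} \lrb{P_2}$ for some $f' \in \VV(X,Y)$, and fix the canonical isomorphisms $\lrb{P_i} \cong F(X_i)$ from the preceding proposition.

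For the discard equation, I expand $\diamond_{P_2} \circ f$; the two middle isomorphisms $\lrb{P_2} \cong F(Y)$ cancel, and by functoriality of $F$ the remaining composite is $\lrb{P_1} \xrightarrow{\cong} F(X) \xrightarrow{F(1_Y \circ f')} F(1) \xrightarrow{\cong} I$. Since $1$ is terminal in $\VV$, we have $1_Y \circ f' = 1_X$, yielding exactly $\diamond_{P_1}$.

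For the copy equation, the same cancellation of middle isomorphisms followed by functoriality of $F$ reduces $\Delta_{P_2} \circ f$ to an arrow whose central component in $\VV$ is $\langle \id_Y, \id_Y \rangle \circ f' = (f' \times f') \circ \langle \id_X, \id_X \rangle$ (naturality of the cartesian diagonal). I then split $F((f'\times f') \circ \langle \id, \id \rangle)$ using functoriality, and push $F(f' \times f')$ through the strong monoidal isomorphism $F(X \times X) \cong F(X) \otimes F(X)$ (naturality of the monoidal coherence of $F$) to obtain $F(f') \otimes F(f')$. Combining with the canonical isomorphisms this rewrites as $(f \otimes f) \circ \Delta_{P_1}$, as required.

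For the lift equation, the analogous cancellation leaves a composite in $\CC$ whose ``core'' contains $F(\eta_Y \circ f')$; by naturality of the unit $\eta : \text{Id} \naturalto GF$ this equals $F(GF(f') \circ \eta_X) = FGF(f') \circ F(\eta_X) = (\,!\,F(f')) \circ F(\eta_X)$. Reattaching the canonical isomorphisms and using that $!f$ equals $F(f')$ conjugated by the same isomorphisms (under $!$) gives $\blift_{P_2} \circ f = {!f} \circ \blift_{P_1}$. The only real obstacle here is bookkeeping of the canonical isomorphisms from the earlier proposition; since they are produced uniformly from the strong monoidal structure of $F$ and its preservation of coproducts, the relevant diagrams commute by coherence, and no nontrivial diagram-chase beyond naturality and functoriality is needed.
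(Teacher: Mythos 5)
The paper states this proposition without proof, and your argument is exactly the routine verification one would supply: cancel the shared canonical isomorphism at $\lrb{P_2}$, then use terminality of $1$, naturality of the cartesian diagonal together with naturality of the strong monoidal coherence of $F$, and naturality of the unit $\eta$, respectively. Your proof is correct and matches the intended (omitted) argument.
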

Because of space limitations, we are unable to provide a complete list of the operational and 
denotational semantics for the languages we discuss, so we confine ourselves to excerpts that
present some ``interesting" rules in Figures~\ref{fig:semantics} and \ref{fig:operational}. The rules for CLNL are obvious special cases of those for ECLNL (which we discuss in the next section).

The evaluation rules for CLNL can be derived from those of ECLNL
(Figure~\ref{fig:operational}) by ignoring the diagram components.
For example, the evaluation rule for (pair) is given by:
\[
\begin{bprooftree}
\AxiomC{$m \Downarrow v$}
\AxiomC{$n \Downarrow v'$}
\BinaryInfC{$\langle m, n \rangle \Downarrow \langle v, v' \rangle$}
\end{bprooftree}
\]
Similarly, the denotational interpretations of terms in CLNL can be derived
from those of ECLNL (Figure~\ref{fig:semantics}) by ignoring the $Q$ label
contexts. For example, the interpretation of
$\lrb{\Phi, \Gamma_1, \Gamma_2 \vdash \langle m,n \rangle: A \otimes B}$
is given by the composition:
\begin{align*}
      \lrb{\Phi}\otimes\lrb{\Gamma_1}\otimes\lrb{\Gamma_2}
      \xrightarrow{\Delta  \otimes \id}
      \lrb{\Phi} \otimes \lrb{\Phi}\otimes\lrb{\Gamma_1}\otimes\lrb{\Gamma_2} \xrightarrow \cong
      \lrb{\Phi}\otimes\lrb{\Gamma_1}\otimes\lrb{\Phi}\otimes\lrb{\Gamma_2}
      \xrightarrow{\lrb m \otimes \lrb n}
      \lrb A \otimes \lrb B.
\end{align*}
\begin{theorem}
Theorems~\ref{thm:derivations} -- \ref{thm:soundness} also hold true when restricted to the CLNL calculus in the obvious way.
\end{theorem}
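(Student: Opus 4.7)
The plan is to exploit the fact that CLNL is literally the diagram-free fragment of ECLNL, both syntactically and semantically, so that every CLNL judgement, derivation, evaluation and denotation can be recovered from its ECLNL counterpart with empty label contexts $Q$. First, I would make precise the syntactic embedding: define a translation $(-)^\sharp$ from CLNL derivations of $\Gamma \vdash m : A$ into ECLNL derivations of $\emptyset; \Gamma \vdash m : A$ by induction on the derivation, leaving every label context empty and using the fact that no CLNL term contains diagram constructs. An inverse translation simply reads off the underlying term and type assignment from any ECLNL derivation whose label context is empty and whose term is diagram-free; this is well defined because none of the ECLNL typing rules can introduce a non-empty $Q$ into a conclusion whose premises all have empty $Q$ (a straightforward rule-by-rule inspection).

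Second, I would observe that an arbitrary CLNL model can be viewed as a degenerate ECLNL model: the required enrichment is furnished by taking the base symmetric monoidal category to be the one-object SMC (so that every hom-object is trivial), or equivalently by using the self-enrichment discussed later in the paper with a trivial choice of $\M$. Under this reinterpretation, the denotational clauses for ECLNL, when applied to a judgement with empty $Q$, reduce definitionally to the CLNL clauses given in the displayed examples (e.g.\ the pair rule). This reduction is essentially by inspection of Figure~\ref{fig:semantics}, using the unit isomorphisms of $\otimes$ to absorb the empty $\lrb Q$.

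With both the syntactic embedding and the semantic reduction in place, each of Theorems~\ref{thm:derivations}--\ref{thm:soundness} transfers mechanically. For the coherence of derivations (Theorem~\ref{thm:derivations}), two CLNL derivations of the same judgement lift to two ECLNL derivations of the same judgement with empty $Q$, which by the ECLNL result have equal interpretations; equality of the corresponding ECLNL morphisms pulls back along the unit isomorphisms to equality in the CLNL model. Subject reduction, progress and soundness follow the same pattern: any CLNL reduction step is an ECLNL reduction step (ignoring the diagram $D$, which remains fixed and empty), and the ECLNL preservation and soundness results specialize directly.

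The only genuinely delicate point, and what I expect to be the main obstacle, is verifying that the ECLNL denotational clauses really do collapse \emph{on the nose} to the CLNL clauses once $Q$ is empty and the enrichment is trivialized. This requires a careful bookkeeping check against Figure~\ref{fig:semantics} for each constructor, in particular confirming that the auxiliary structural isomorphisms used to shuffle $\lrb Q$ components become identities (up to the coherent isomorphisms of the SMC) when $\lrb Q = I$, so that the composite morphisms agree strictly with the CLNL definitions. Once this bookkeeping is dispatched, the theorem is immediate.
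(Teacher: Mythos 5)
Your proposal is correct and matches the paper's intended argument: the theorem is stated without an explicit proof precisely because CLNL is, by construction, the diagram-free, empty-label-context fragment of ECLNL --- its typing, evaluation and denotational rules are defined as the ECLNL ones with $\lrb{Q}=I$ absorbed by unit isomorphisms --- and every CLNL model extends to an ECLNL model via the paper's theorem that every CLNL model induces an enriched CLNL model, together with a trivial choice of $\M$ and $E$. The one imprecision is your first suggestion that the enrichment is ``furnished by taking the base symmetric monoidal category to be the one-object SMC'': the $\VV$-enrichment of $\CC$ comes from the change-of-base construction and is independent of the choice of $\M$, but since your alternative phrasing (canonical self-enrichment plus trivial $\M$) gets this right, the argument goes through as you describe.
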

\section{Enriching the CLNL calculus}\label{sec:eclnl}
In this section we introduce the \emph{enriched} CLNL calculus, ECLNL, whose syntax and 
operational semantics coincide with those of Proto-Quipper-M~\cite{pqm-small}. We rename the
language in order to emphasize its dependence on its abstract categorical model,   
an LNL model with an associated \emph{enrichment}. The 
categorical enrichment provides a natural framework for formulating the models we 
use, and for stating the constructivity properties (cf. Subsection~\ref{sub:construct}) that we 
want our concrete models to satisfy. 

We begin by briefly recalling the main ingredients of categories enriched over a symmetric monoidal closed category $(\VV,\otimes, \multimap, I)$: 
\begin{itemize}
	\item A $\VV$-\emph{enriched} category (briefly, a $\VV$-category) $\AAE$ consists of a collection of objects; for each pair of objects $A,B$ there is a `hom' object $\AAE(A,B)\in\VV$; for each object $A$, there is a `unit' morphism $u_A:I\to\AAE(A,A)$ in $\VV$; and given objects $A,B,C$, there is a `composition' morphism $c_{ABC}:\AAE(A,B)\otimes\AAE(B,C)\to\AAE(A,C)$ in $\VV$.
	 \item A $\VV$-\emph{functor} $F\colon \AAE\to\BBE$ between $\VV$-categories assigns to each object $A\in\AAE$ an object $FA\in\BBE$, and to each pair of objects $A,A'\in\AAE$ a $\VV$-morphism $F_{AA'}:\AAE(A,A')\to\BBE(FA,FA')$;
	\item A $\VV$-\emph{natural transformation} between $\VV$-functors $F,G:\AAE\to\BBE$ consists of $\VV$-morphisms $\alpha_A:I\to \BBE(FA,GA)$ for each $A\in\AAE$;
	\item A $\VV$-functor $F:\AAE\to\BBE$ has a right $\VV$-\emph{adjoint} $G:\BBE\to\AAE$ if there is a $\VV$-isomorphism, $\BBE(FA,B)\cong\AAE(A,GB)$ that is $\VV$-natural in both $A$ and $B$;
\end{itemize}
The $\VV$-morphisms that occur in these definitions are all subject to additional conditions expressed in terms of commuting diagrams in $\VV$; for these we refer to \cite[Chapter 6]{borceux:handbook2}, which provides a detailed exposition on enriched category theory. We denote the category of $\VV$-categories by $\VV$-$\Cat$.

The first example of a $\VV$-enriched category is the category $\VVE$ that has
the same objects as $\VV$ and whose hom objects are given by
$\VVE(A,B)=A\multimap B$. We refer to this category as the
\emph{self-enrichment} of $\VV$.  If $\AAE$ is a $\VV$-category, then the
$\VV$-\emph{copower} of an object $A\in\AAE$ by an object $X\in\VV$ is an
object $X\odot A\in\AAE$ together with an isomorphism $\AAE(X\odot
A,B)\cong\VVE(X,\AAE(A,B)),$ which is $\VV$-natural in $B$. 

Any (lax) monoidal functor $G:\CC\to\VV$ between symmetric monoidal closed categories induces a \emph{change of base} functor $G_*:\CC$-$\Cat\to\VV$-$\Cat$ assigning to each $\CC$-category $\AAE$ a $\VV$-category $G_*\AAE$ with the same objects as $\AAE$, but with hom objects given by $(G_*\AAE)(A,B)=G\AAE(A,B)$. 
In particular, if $\VV$ is locally small (which we always assume), then the
functor $\VV(I,-):\VV\to\Set$ is a monoidal functor; the
corresponding change of base functor assigns to each $\VV$-category $\AAE$ its
\emph{underlying category}, which we denote with $\mathbf{A}$, i.e., the same
letter but in boldface. We note that the underlying category of $\VVE$ is
isomorphic to $\VV$. Moreover, if the monoidal functor $G$ above has a strong
monoidal left adjoint, then the corresponding change of base functor maps
$\CC$-categories to $\VV$-categories with isomorphic underlying categories, and
$\CC$-functors to $\VV$-functors with the same underlying functors (up to the
isomorphisms between the underlying categories). If $\VV$ has all coproducts,
then $\VV(I,-)$ has a left adjoint $V:\Set\to\VV$ that is monoidal~\cite[Proposition 6.4.6]{borceux:handbook2}. Applying
the corresponding change of base functor to a locally small category equips
this category with the \emph{free} $\VV$-enrichment.

Symmetric monoidal categories can be generalized to $\VV$-\emph{symmetric monoidal} categories, where the monoidal structure is also enriched over $\VV$ \cite[\S4]{change-of-base}. It follows from
\cite[Proposition 6.3]{change-of-base} that the functor $G_*$ above maps
$\CC$-symmetric monoidal categories to $\VV$-symmetric monoidal categories. If
for each fixed $A\in\VV$, the $\VV$-functor $(- \otimes A)$ has a right
$\VV$-adjoint, denoted $(A\multimap -)$, then we call $\AAE$
a $\VV$-symmetric monoidal \emph{closed} category.
We note that the
$(- \otimes -)$ and $(- \multimap -)$ bifunctors on $\VV$ can be \emph{enriched} to
$\VV$-bifunctors on $\VVE$ (i.e., such that their underlying functors
correspond to the original functors) such that $\VVE$ becomes a $\VV$-symmetric
monoidal closed category.

Finally, if $\VV$ has finite products, a $\VV$-category $\AAE$ is said to have $\VV$-coproducts if it has an object $0$ and for each $A,B\in\AAE$ there is an object $A+B\in\AAE$ together with isomorphisms
\[1\cong \AAE(0,C),\ \ \ \AAE(A,C)\times\AAE(B,C)\cong\AAE(A+B,C),\] $\VV$-natural in $C$.

\begin{definition}\label{def:enriched-lnl}
An \emph{enriched CLNL model} is given by the following data:
\begin{enumerate}
\item A cartesian closed category $\VV$ together with its self-enrichment 
$\VVE$, such that
  $\VVE$ has finite $\VV$-coproducts; 
\item A $\VV$-symmetric monoidal closed category $\CCE$ with underlying category $\CC$ such that
    $\CCE$ has $\VV$-copowers and finite $\VV$-coproducts; 
\item A $\VV$-adjunction:
  \begin{tikzcd}
  \VVE \arrow[rr, bend left, "-\ \odot \ I"] & \rotatebox{90}{$\vdash$} &\CCE, \arrow[ll, bend left, "\CCE(I{,}-)"]
  \end{tikzcd}
  together with a CLNL model on the underlying adjunction.
\end{enumerate}
We also adopt the following notation:
$F$ and $G$ are the underlying functors of $(- \odot I)$ and $\CCE(I,-)$ respectively
and we use the same notation for the underlying CLNL model as in Definition~\ref{def:lnl}.
\end{definition}
By definition, every enriched CLNL model is a CLNL model with some additional
(enriched) structure. But as the next theorem shows, every CLNL model
induces the additional enriched structure as well.
The CCC $\VV$ can be equipped with its self-enrichment $\VVE$ in a canonical way.
The symmetric monoidal structure of the adjunction then allows us to equip the
SMCC $\CC$ with a $\VV$-enrichment by making use of the induced change-of-base
functors which stem from the adjunction. Then one can show that the now constructed
$\VV$-enriched category $\CCE$ has $\VV$-copowers and the original adjunction enriches
to a $\VV$-enriched one. We conclude:
\begin{theorem}
Every CLNL model induces an enriched CLNL model.
\end{theorem}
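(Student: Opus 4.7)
The plan is to synthesize each piece of the enriched CLNL model from the ingredients of the underlying CLNL model, leveraging the fact that $F$ is a strong symmetric monoidal left adjoint. First I would equip $\VV$ with its canonical self-enrichment $\VVE$; since $\VV$ is cartesian closed with finite coproducts, $\VVE$ automatically has finite $\VV$-coproducts, as the ordinary coproducts in $\VV$ satisfy the required $\VV$-natural universal property by currying against the cartesian closed structure.

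Next I would produce $\CCE$ by change of base. Since $\CC$ is symmetric monoidal closed, it carries a canonical self-enrichment as a $\CC$-symmetric monoidal closed category with $\CC$-coproducts, $\CC$-copowers, and hom objects $A \multimap B$. Because $F \dashv G$ is symmetric monoidal, $G$ is lax monoidal, so the change-of-base functor $G_*$ converts this self-enrichment into a $\VV$-category $\CCE$ with hom objects $\CCE(A,B) := G(A \multimap B)$. By the change-of-base proposition cited in the preceding paragraph of the paper, $\CCE$ inherits a $\VV$-symmetric monoidal structure; closedness transports because
\[
\CCE(A \otimes B, C) = G((A \otimes B) \multimap C) \cong G(A \multimap (B \multimap C)) = \CCE(A, B \multimap C),
\]
and one checks this isomorphism is $\VV$-natural, yielding the required right $\VV$-adjoint to the $\VV$-functor $(- \otimes A)$.

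For the $\VV$-copowers and $\VV$-coproducts in $\CCE$, I would take $X \odot A := F(X) \otimes A$ together with the ordinary coproduct objects of $\CC$. The copower isomorphism reduces to
\[
G(F(X) \otimes A \multimap B) \cong G(F(X) \multimap (A \multimap B)) \cong X \to G(A \multimap B),
\]
where the last step uses $F \dashv G$ together with strong monoidality; this chain is $\VV$-natural in $B$. The $\VV$-coproducts come similarly by combining the ordinary coproduct universal properties of $\CC$ with the isomorphisms $F(\varnothing) \cong 0$ and $F(X \amalg Y) \cong F(X) + F(Y)$, which hold because $F$ is strong monoidal and a left adjoint.

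Finally, for the $\VV$-adjunction, I would verify that the underlying functors of $(- \odot I)$ and $\CCE(I,-)$ are isomorphic to $F$ and $G$ respectively, using $F(X) \otimes I \cong F(X)$ and $G(I \multimap A) \cong G(A)$; the enriched adjunction is then an instance of the copower adjunction at $A = I$. The main obstacle throughout is not choosing the object-level data — these are the evident candidates — but verifying the $\VV$-naturality of all the asserted isomorphisms, and checking that they cohere with the associator, unitors, and symmetry of the $\VV$-symmetric monoidal structure on $\CCE$. Each individual step is routine enriched category theory, but the pieces fit together only because $F$ is simultaneously strong monoidal and a left adjoint, which is precisely what ensures that the cartesian structure on $\VV$ and the linear structure on $\CC$ interact coherently at the enriched level.
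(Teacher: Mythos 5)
Your proposal is correct and follows essentially the same route as the paper: the paper's proof is a two-citation appeal (to Prop.~6.7 of the EEC paper and Thm.~11.2 of the change-of-base paper), and the surrounding text sketches exactly your construction --- self-enrichment of $\VV$, change of base along $G$ to get $\CCE$ with hom objects $G(A \multimap B)$, copowers $X \odot A = F(X)\otimes A$, and the enriched adjunction recovered from the copower at $I$. You have simply unpacked the content of the cited results, correctly identifying that the $\VV$-naturality and coherence checks (which the paper delegates to the references) are where the real work lies.
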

\begin{proof}
Combine \cite[Proposition 6.7]{eec} and \cite[Theorem 11.2]{change-of-base}.
\end{proof}
The following proposition will be useful when defining the semantics of our language.
\begin{proposition}
In every enriched CLNL model:
\begin{enumerate}
\item There is a $\VV$-natural isomorphism $G(A \multimap B) \cong \CCE(A, B);$
\item $!(A \multimap B) \cong F(\CCE(A, B)).$
\item There is a natural isomorphism $\Psi: \CC(A,B) \cong \VV(1, \CCE(A,B)).$
\end{enumerate}
\end{proposition}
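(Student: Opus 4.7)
All three parts are immediate consequences of the interaction between the $\VV$-adjunction in Definition~\ref{def:enriched-lnl}, the $\VV$-enriched closed structure of $\CCE$, and the construction of the underlying category of a $\VV$-category. I would treat them in the given order, since (2) uses (1).

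For (1), I would observe that the $\VV$-adjunction in Definition~\ref{def:enriched-lnl} identifies $G$ on objects with the $\VV$-functor $\CCE(I,-)$, since that is exactly the right $\VV$-adjoint appearing there. The $\VV$-symmetric monoidal closed structure of $\CCE$ supplies a $\VV$-natural isomorphism $\CCE(X \otimes A, B) \cong \CCE(X, A \multimap B)$; this is the $\VV$-enriched tensor--hom adjunction packaged as an iso of hom-objects. Specialising $X = I$ and precomposing with the unit iso $I \otimes A \cong A$ gives
\[
G(A \multimap B) \;=\; \CCE(I, A \multimap B) \;\cong\; \CCE(I \otimes A, B) \;\cong\; \CCE(A, B),
\]
and each step is $\VV$-natural in $A$ and $B$, so the composite is as well.

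For (2), I would simply apply the functor $F$ to the isomorphism from (1) and use the definition $! := F \circ G$ to get $!(A \multimap B) = FG(A \multimap B) \cong F(\CCE(A,B))$.

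For (3), I would recall from the preliminaries that the underlying category $\CC$ of $\CCE$ arises by change of base along the representable monoidal functor $\VV(I,-)\colon \VV \to \Set$, so by construction $\CC(A,B) = \VV(I, \CCE(A,B))$, naturally in $A$ and $B$ (the composition in $\CC$ is defined precisely through the enriched composition morphisms of $\CCE$, which is exactly what functoriality of change of base encodes). Because $\VV$ is cartesian closed, its monoidal unit $I$ is the terminal object $1$, so this equality is the desired natural isomorphism $\Psi\colon \CC(A,B) \cong \VV(1,\CCE(A,B))$.

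The only real obstacle is bookkeeping: one has to keep straight the distinction between $\VV$-naturality, as in (1), and ordinary naturality of $\Set$-valued bifunctors, as in (3), and remember that the adjunction $(-\odot I) \dashv \CCE(I,-)$ is $\VV$-enriched, so that the tensor--hom iso used in (1) is genuinely a $\VV$-natural iso of hom-objects rather than only an iso of underlying hom-sets. Once those conventions are unwound from the preliminaries, none of the three items involves any calculation beyond composing the cited isomorphisms.
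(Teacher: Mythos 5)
Your proposal is correct. Parts (1) and (2) coincide exactly with the paper's argument: $G(A \multimap B) = \CCE(I, A \multimap B) \cong \CCE(A,B)$ via the enriched tensor--hom adjunction at the unit, and then apply $F$. For part (3), however, you take a genuinely different route. The paper constructs $\Psi$ as the composite
\[
\CC(A,B) \cong \CC(I, A \multimap B) \cong \CC(F1, A \multimap B) \cong \VV(1, G(A \multimap B)) \cong \VV(1, \CCE(A,B)),
\]
i.e., it stays entirely within the \emph{underlying} CLNL model, using the closed structure of $\CC$, the strong monoidality iso $I \cong F1$, the ordinary adjunction $F \dashv G$, and then item (1). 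You instead unfold the definition of the underlying category: since $\CC$ arises from $\CCE$ by change of base along $\VV(I,-)$ and $I = 1$ in the cartesian closed $\VV$, the identification $\CC(A,B) = \VV(1, \CCE(A,B))$ holds essentially by construction. Your argument is shorter and more definitionally transparent; the paper's has the mild advantage of expressing $\Psi$ through the same currying and $F \dashv G$ transpositions that recur elsewhere in the denotational semantics (e.g., in the (diag) rule, which applies $\Psi$ to a concrete morphism), and of not depending on which presentation of the underlying category one has fixed. Both composites are natural isomorphisms, which is all the proposition asserts, so the difference is one of packaging rather than substance.
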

\begin{proof}
  \begin{align*}
  &(1.) &&G(A \multimap B) = \CCE(I, A \multimap B) \cong \CCE(A, B);\\
  &(2.) &&\text{Apply } F \text{ to (1.)};\\
  &(3.) &&\CC(A,B) \cong\ \CC(I, A \multimap B) \cong
    \CC(F1, A\multimap B) \cong
    \VV(1, G(A\multimap B)) \cong \VV(1, \CCE(A, B)). \qedhere
  \end{align*}
\end{proof}
\subsection{The String Diagram model}
The ECLNL calculus is designed to describe string diagrams. So we first
explain exactly what kind of diagrams we have in mind. 
The morphisms of any symmetric 
monoidal category can be described using string diagrams~\cite{selinger-graphical}\footnote{The interested reader can consult~\cite{selinger-graphical} 
for more information on string diagrammatic representations of morphisms.}.
So, we choose an arbitrary 
symmetric monoidal category $\M,$ and then the string diagrams
we will be working with are exactly those that correspond to the morphisms of $\M$.

For example, if we set $\M =
\textbf{FdCStar},$ the category of finite-dimensional C*-algebras and completely
positive maps, then we can use our calculus for quantum programming. Another
interesting choice for quantum computing, in light of recent
results~\cite{zx-complete}, is setting $\M$ to be a suitable category of ZX-calculus
diagrams.
If $\M = \mathbf{PNB},$ the category of Petri Nets with Boundaries~\cite{pnb}, then our
calculus may be used to generate such Petri nets.

As with CLNL, our discussion of ECLNL begins with its categorical model. 
\begin{definition}
An \emph{ECLNL model} is given by the following data:
\begin{itemize}
\item An enriched CLNL model (Definition~\ref{def:enriched-lnl});
\item A symmetric monoidal category $(\M, \boxtimes, J)$ and a strong symmetric 
monoidal functor $E: \M \to \CC.$
\end{itemize}
\end{definition}
For the remainder of the section, we consider an arbitrary, but fixed, ECLNL model.
\subsection{Syntax and Semantics}
We first introduce new types in our syntax that correspond to the objects of
$\M.$ Using terminology introduced in~\cite{pqm-small}, where string diagrams 
are referred to as \emph{circuits}, we let  $W$ be a fixed set of \emph{wire types},
and we assume there is an interpretation $\lrb{-}_{\M} : W \to
\text{Ob}(\M).$ We use $\alpha, \beta, \ldots$ to range over the elements
of $W$. For a wire type $\alpha$, we define the interpretation of $\alpha$ in $\CC$ 
to be $\lrb \alpha = E(\lrb \alpha_M).$
The grammar for $\M$-types is given in Figure~\ref{fig:syntax}, and 
we extend $\lrb{-}_{\M}$ to $\M$-types in the obvious way.

To build more complicated string diagrams from simpler components, we need to refer
to certain wires of the component diagrams, to specify how to compose them. This
is accomplished by assigning \emph{labels} to the wires of our string diagrams, as
demonstrated in the following construction.

Let $L$ be a countably infinite set of labels. We
use letters $\ell, \kay$ to range over the elements of $L.$ A \emph{label
context} is a function from a finite subset of $L$ to $W,$
which we write as $\ell_1: \alpha_1, \ldots , \ell_n :
\alpha_n.$ We use $Q_1, Q_2, \ldots$ to refer to label contexts. To each label
context $Q = \ell_1: \alpha_1, \ldots , \ell_n : \alpha_n$, we assign an object
of $\M$ given by
$\lrb Q_{\M} := \lrb{\alpha_1}_{\M} \boxtimes \cdots \boxtimes \lrb{\alpha_n}_{\M}.$
If
$Q = \emptyset,$ then $\lrb Q_{\M} = J.$
We denote \emph{label tuples} by $\vell$ and $\vkay$; these are simply tuples of
label terms built up using the (pair) rule.

We now define the category $\M_{L}$ of \emph{labelled string diagrams}:
\begin{itemize}
  \item The objects of $\M_{L}$ are label contexts $Q$.
  \item The morphisms of $\M_{L}(Q_1, Q_2)$ are exactly
  the morphisms of $\M(\lrb{Q_1}_{\M}, \lrb{Q_2}_{\M}).$
\end{itemize}
So, by construction, $\lrb{-}_{\M}: \M_L \to \M$ is a full and faithful functor.
Observe that if $Q$ and $Q'$ are label contexts that differ only by a
renaming of labels, then $Q \cong Q'$. Moreover, for any two label contexts
$Q_1$ and $Q_2$, by renaming labels we can construct
$Q_1' \cong Q_1$ such that $Q_1'$ and $Q_2$ are
disjoint.

We equip the category $\M_L$ with the unique (up to natural isomorphism)
symmetric monoidal structure that makes $\lrb -_{\M}$ a symmetric monoidal
functor. We then have $Q \otimes Q' \cong Q \cup Q'$ for any pair of disjoint
label contexts. We use $S, D$ to range over the morphisms of $\M_L$ and we visualise
them in the following way:
\cstikz[0.8]{labelled-string-example.tikz}
where $S: \{\ell_1: \alpha_1 , \ldots, \ell_n : \alpha_n \} \to \{\ell'_1: \beta_1 , \ldots, \ell'_m : \beta_m \} \in \M_L$
and $\lrb S_{\M} : \lrb{\alpha_1}_{\M} \boxtimes \cdots \boxtimes \lrb{\alpha_n}_{\M} \to \lrb{\beta_1}_{\M} \boxtimes \cdots \boxtimes \lrb{\beta_m}_{\M} \in \M.$ 

A label context $Q = \ell_1: \alpha_1 , \ldots, \ell_n : \alpha_n$ is
interpreted in $\CC$ as
$\lrb Q =  \lrb{\alpha_1} \otimes \cdots \otimes \lrb{\alpha_n}$
or by $\lrb Q = I$ if $Q = \emptyset.$
A labelled string diagram $S: Q \to Q'$ is interpreted in $\CC$ as the composition:
\[{\lrb{S} := \lrb Q \xrightarrow{\cong} E(\lrb Q_\M) \xrightarrow{E(\lrb S_\M)} E(\lrb{Q'}_\M)\xrightarrow{\cong} \lrb{Q'}.}\]

We also add the type
Diag$(T,U)$ to the language (see Figure~\ref{fig:syntax}); Diag$(T,U)$ 
should be thought of as the type of string diagrams with inputs
$T$ and outputs $U$, where $T$ and $U$ are $\M$-types.

The term language is extended by adding the labels and label tuples just discussed,
and the terms $\bbox_T m,\ \apply(m,n)$ and $(\vell, S,
\vell').$ The term $\bbox_T m$ should be thought of as "boxing up" an already completed diagram $m$; $\apply(m,n)$ represents the
application of the boxed diagram $m$ to the state $n$; and the term $(\vell, S, \vell')$ is a value which represents a boxed diagram.
\begin{figure*}
\begin{tabular}{l  l  l  l}
\multicolumn{4}{c}{\underline{The CLNL Calculus}}\\
  Variables & $x,y,z$ & & \\
	Types & $A, B, C$ &                                           ::= &   $0$ | $A+B$ | $I$ | $A\otimes B$ | $A \multimap B$ | $!A$ \\
	Intuitionistic types & $P, R$ &                               ::= & $0$ | $P+R$ | $I$ | $P\otimes R$ | $!A$ \\
  Variable contexts & $\Gamma$ &                                ::= & $x_1: A_1, x_2: A_2, \ldots, x_n : A_n$\\
  Intuitionistic variable contexts & $\Phi$ &                   ::= & $x_1: P_1, x_2: P_2, \ldots, x_n : P_n$\\
  Terms & $m, n, p$ & ::= & $x$ | $c$ | let $x = m$ in $n$ | $\square_C m$ | left$_{A,B} m$ | right$_{A,B} m$ | \\ 
  & & & case $m$ of $\{$left $x\to n\ |$ right $y \to p\}$ | $*$ | $m;n$ | $\langle m, n \rangle$ | \\ 
  & & & let $\langle x, y \rangle = m$ in $n$ | $\lambda x^A.m$ | $mn$ | lift $m$ | force $m$ \\
  Values & $v,w$ & ::= & $x$ | $c$ | left$_{A,B} v$ | right$_{A,B} v$ | $*$ | $\langle v, w \rangle$ | $\lambda x^A.m$ | lift $m$ \\
Term Judgements & \multicolumn{2}{l}{$\Gamma \vdash m: A$} &(typing rules below - ignore  $Q$ contexts)\\
  \multicolumn{4}{c}{\underline{The ECLNL Calculus}}\\
 \multicolumn{4}{c}{Extend the CLNL syntax with:}\\
   \color{black} Labels & \color{black} $\ell, \kay$ & & \\
  \color{black} Labelled string diagrams\phantom{ntexts}& \color{black} $S, D$ & & \\
	Types & $A,B,C$ &                                           ::= &   $\cdots$ | {\color{black}$\alpha$}  |     Diag$(T,U)$\\
Intuitionistic types & $P, R$ &   ::= &  $\cdots$ | Diag$(T,U)$ \\
  \color{black} M-types & \color{black} $T, U$ & \color{black}   ::= & \color{black} $\alpha$ | \color{black} $I$ | $T\otimes U$\\
  \color{black} Label contexts & \color{black} $Q$ & \color{black} ::= & \color{black} $\ell_1: \alpha_1, \ell_2: \alpha_2, \ldots, \ell_n : \alpha_n$\\
   Terms & $m,n,p$ & ::= & $\cdots$ |  $\ell$ | $\bbox_T m$ | $\apply(m,n)$ | $(\vell, S, \vell')$
   \phantom{  left$_{A,B} m$ | right$_{A,B} m$ | case $m$ of $\{$left $x\to n\ |$ right $y \to p\}$ |}\\
    {\color{black}Label tuples} & {\color{black}$\vell, \vkay$} & {\color{black}::=} & {\color{black}$\ell$ | $*$ | $\langle \vell, \vkay \rangle$}\\
    Values & $v,w$ & ::= &  $\cdots$ | {\color{black}$\ell$} | $(\vell,S, \vell')$\\
 Configurations & $(S, m)$ &&\\
  Term Judgements & \multicolumn{3}{l}{$\Gamma; Q \vdash m: A$ }\\
  Configuration Judgements & \multicolumn{3}{l}{$Q \vdash (S,m): A; Q'$ \qquad (cf. Definition~\ref{def:config-judgement})}\\
\multicolumn{4}{c}{\underline{The Typing Rules}}
\end{tabular}
\begin{footnotesize}
  \[
    \begin{bprooftree}
    \AxiomC{}
    \RightLabel{(var)} \UnaryInfC{$\Phi, x:A; \emptyset \vdash x: A$}
    \end{bprooftree}
    {
    \color{black} 
    \begin{bprooftree}
    \AxiomC{}
    \RightLabel{(label)} \UnaryInfC{$\Phi; \ell : \alpha \vdash \ell: \alpha$}
    \end{bprooftree}
    }
    \begin{bprooftree}
    \AxiomC{}
    \RightLabel{(const)} \UnaryInfC{$\Phi; \emptyset \vdash c: A_c$}
    \end{bprooftree}
    \qquad\qquad
    \begin{bprooftree}
    \AxiomC{$\Phi, \Gamma_1; Q_1 \vdash m : A$}
    \AxiomC{$\Phi, \Gamma_2, x : A; Q_2 \vdash n : B$}
    \RightLabel{(let)} \BinaryInfC{$\Phi, \Gamma_1, \Gamma_2; Q_1, Q_2 \vdash \llet\ x = m\ \text{in}\ n : B$}
    \end{bprooftree}
  \]

  \[
    \begin{bprooftree}
    \AxiomC{$\Gamma; Q \vdash m : 0$}
    \RightLabel{(initial)} \UnaryInfC{$\Gamma; Q \vdash \square_C m : C$}
    \end{bprooftree}
    \begin{bprooftree}
    \AxiomC{$\Gamma; Q \vdash m : A$}
    \RightLabel{(left)} \UnaryInfC{$\Gamma; Q \vdash \lleft_{A,B} m : A+B$}
    \end{bprooftree}
    \begin{bprooftree}
    \AxiomC{$\Gamma; Q \vdash m : B$}
    \RightLabel{(right)} \UnaryInfC{$\Gamma; Q \vdash \rright_{A,B} m : A+B$}
    \end{bprooftree}
    \qquad\qquad
    \begin{bprooftree}
    \AxiomC{}
    \RightLabel{(*)} \UnaryInfC{$\Phi; \emptyset \vdash *: I$}
    \end{bprooftree}
  \]

  \[
    \begin{bprooftree}
    \AxiomC{$\Phi, \Gamma_1; Q_1 \vdash m : A+B$}
    \AxiomC{$\Phi, \Gamma_2, x : A; Q_2 \vdash n : C$}
    \AxiomC{$\Phi, \Gamma_2, y : B; Q_2 \vdash p : C$}
    \RightLabel{(case)} \TrinaryInfC{$\Phi, \Gamma_1, \Gamma_2; Q_1, Q_2 \vdash \ccase\ m\ \text{of}\ \{\lleft\ x \to n\ |\ \rright\ y \to p\} : C$}
    \end{bprooftree}
    \qquad\ \ 
    \begin{bprooftree}
    \AxiomC{$\Phi, \Gamma_1; Q_1 \vdash m : I$}
    \AxiomC{$\Phi, \Gamma_2; Q_2 \vdash n : C$}
    \RightLabel{(seq)} \BinaryInfC{$\Phi, \Gamma_1, \Gamma_2; Q_1, Q_2 \vdash m;n : C$}
    \end{bprooftree}
  \]


  \[
    \begin{bprooftree}
    \AxiomC{$\Phi, \Gamma_1; Q_1 \vdash m : A$}
    \AxiomC{$\Phi, \Gamma_2; Q_2 \vdash n : B$}
    \RightLabel{(pair)} \BinaryInfC{$\Phi, \Gamma_1, \Gamma_2; Q_1, Q_2 \vdash \langle m, n \rangle : A \otimes B$}
    \end{bprooftree}
  \begin{bprooftree}
  \AxiomC{$\Phi,\Gamma_1;Q_1\vdash m : A\otimes B$}
  \AxiomC{$\Phi, \Gamma_2,x:A,y:B; Q_2 \vdash n : C$}
  \RightLabel{(let-pair)} \BinaryInfC{$\Phi, \Gamma_1, \Gamma_2; Q_1, Q_2 \vdash\llet\ \langle x,y \rangle=m\ \text{in}\ n:C$}
  \end{bprooftree}
  \]
  
  \[
    \begin{bprooftree}
    \AxiomC{$\Gamma, x: A; Q \vdash m : B$}
    \RightLabel{(abs)} \UnaryInfC{$\Gamma; Q \vdash \lambda x^A . m : A \multimap B$}
    \end{bprooftree}
    \begin{bprooftree}
    \AxiomC{$\Phi, \Gamma_1; Q_1 \vdash m : A \multimap B$}
    \AxiomC{$\Phi, \Gamma_2; Q_2 \vdash n : A$}
    \RightLabel{(app)} \BinaryInfC{$\Phi, \Gamma_1, \Gamma_2; Q_1, Q_2 \vdash mn : B$}
    \end{bprooftree}
    \begin{bprooftree}
    \AxiomC{$\Phi; \emptyset \vdash m : A$}
    \RightLabel{(lift)} \UnaryInfC{$\Phi; \emptyset \vdash \lift\ m : !A$}
    \end{bprooftree}
    \begin{bprooftree}
    \AxiomC{$\Gamma; Q \vdash m : !A$}
    \RightLabel{(force)} \UnaryInfC{$\Gamma; Q \vdash \force\ m : A$}
    \end{bprooftree}
  \]
  
  \[
  \color{black} 
  \begin{bprooftree}
  \def\ScoreOverhang{0.5pt}
  \AxiomC{$\Gamma;Q \vdash m : !(T\multimap U)$}
  \RightLabel{(box)} \UnaryInfC{$\Gamma;Q\vdash\bbox_Tm:\Diag(T,U)$}
  \end{bprooftree}
  \begin{bprooftree}
  \def\ScoreOverhang{0.5pt}
  \def\defaultHypSeparation{\hskip .1in}
  \AxiomC{$\Phi,\Gamma_1; Q_1 \vdash m : \Diag(T,U)$}
  \AxiomC{$\Phi,\Gamma_2; Q_2 \vdash n :T$}
  \RightLabel{(apply)} \BinaryInfC{$\Phi,\Gamma_1,\Gamma_2; Q_1,Q_2 \vdash \apply(m,n) : U$}
  \end{bprooftree}
  \begin{bprooftree}
  \def\ScoreOverhang{0.5pt}
  \def\defaultHypSeparation{\hskip .1in}
  \AxiomC{$\emptyset; Q\vdash\vell:T$}
  \AxiomC{$\emptyset; Q'\vdash\vell':U$}
  \AxiomC{$S \in \M_{L}(Q,Q')$}
  \RightLabel{(diag)} \TrinaryInfC{$\Phi;\emptyset\vdash(\vell,S,\vell'):\Diag(T,U)$}
  \end{bprooftree}
  \]

\end{footnotesize}
  \caption{Syntax of the CLNL and ECLNL calculi.}\label{fig:syntax}
  \end{figure*}

Users of the ECLNL programming language are not expected to write labelled 
string diagrams $S$ or terms such as $(\vell, S, \vell').$
Instead, these terms are computed by the programming language itself.
Depending on the diagram model, the language should be extended with
constants that are exposed to the user, for example, for quantum computing, a
constant $h: (\textbf{qubit} \multimap \textbf{qubit})$ could be utilised
by the user to build quantum circuits. Then the
term $\bbox_{\textbf{qubit}}\ \lift\ h$ would reduce to a term $(\ell, H, \kay)$
where $H$ is a labelled string diagram representing the Hadamard gate (where
technically each term should be part of a configuration, see below).

The term typing judgements from the previous section are now extended to include a label context as well, which is separated from
the variable context using a semicolon; the new format of a term typing judgement is $\Gamma; Q \vdash m: A.$ Its interpretation is
a morphism $\lrb \Gamma \otimes \lrb Q \to \lrb A$ in $\CC$ that is defined by induction on the derivation as shown in Figure~\ref{fig:semantics}.
\begin{figure*}
\begin{align*}
\lrb{\alpha}           &= E(\lrb \alpha_\M)\\
\lrb{0}                &= 0\\
\lrb{A + B}            &= \lrb A + \lrb B\\
\lrb{I}                &= I\\
\lrb{A \otimes B}      &= \lrb A \otimes \lrb B\\
\lrb{A \multimap B}    &= \lrb A \multimap \lrb B\\
\lrb{!A}               &= ! \lrb A\\
\lrb{\text{Diag}(T,U)} &= F(\CCE(\lrb T, \lrb U))
\end{align*}
\begin{small}
\begin{align*}
&\lrb{\Phi, \Gamma_1, \Gamma_2; Q_1, Q_2 \vdash \langle m,n \rangle: A \otimes B} :=
      \lrb{\Phi}\otimes\lrb{\Gamma_1}\otimes\lrb{\Gamma_2}\otimes\lrb{Q_1} \otimes \lrb{Q_2}
        \xrightarrow{\Delta  \otimes \id}
      \lrb{\Phi} \otimes \lrb{\Phi}\otimes\lrb{\Gamma_1}\otimes\lrb{\Gamma_2}\otimes\lrb{Q_1} \otimes \lrb{Q_2} \xrightarrow \cong\\
     &\qquad \lrb{\Phi}\otimes\lrb{\Gamma_1}\otimes\lrb{Q_1}\otimes\lrb{\Phi}\otimes\lrb{\Gamma_2}\otimes\lrb{Q_2}
      \xrightarrow{\lrb m \otimes \lrb n}
        \lrb A \otimes \lrb B\\
&\lrb{\Phi, \Gamma_1, \Gamma_2; Q_1, Q_2 \vdash\llet\ \langle x,y \rangle=m\ \text{in}\ n:C} := \lrb{\Phi}\otimes\lrb{\Gamma_1}\otimes\lrb{\Gamma_2}\otimes\lrb{Q_1} \otimes
  			\lrb{Q_2}\xrightarrow{\Delta \otimes\id}\lrb{\Phi}\otimes\lrb{\Phi}\otimes \lrb{\Gamma_1}\otimes\lrb{\Gamma_2}\otimes\lrb{Q_1} \otimes \lrb{Q_2} \xrightarrow \cong \\
        &\qquad \lrb{\Phi}\otimes\lrb{\Gamma_1}\otimes\lrb{Q_1}\otimes\lrb{\Phi}\otimes\lrb{\Gamma_2}\otimes\lrb{Q_2}
  			\xrightarrow{\lrb{m}\otimes \id}
  			\lrb{A\otimes B}\otimes \lrb{\Phi}\otimes\lrb{\Gamma_2}\otimes\lrb{Q_2} \xrightarrow \cong
       	\lrb{\Phi}\otimes\lrb{\Gamma_2}\otimes\lrb{A}\otimes\lrb{ B}\otimes \lrb{Q_2}
  			\xrightarrow{\lrb{n}}
  			\lrb{C}\\
&\lrb{\Phi; \emptyset \vdash \lift\ m :!A} := \lrb \Phi \xrightarrow{\blift} !\lrb \Phi \xrightarrow{! \lrb m} ! \lrb A\\
&\lrb{\Gamma; Q \vdash \force\ m :A} := \lrb \Gamma \otimes \lrb Q \xrightarrow{\lrb m} ! \lrb A \xrightarrow{\epsilon} \lrb A \\
&\lrb{\Gamma; Q \vdash \bbox_T m : \Diag(T,U)} := \lrb{\Gamma}\otimes\lrb{Q}\xrightarrow{\lrb{m}}\ !(\lrb{T}\multimap\lrb{U}) \xrightarrow \cong \lrb{\text{Diag}(T,U)}\\
&\lrb{\Phi, \Gamma_1, \Gamma_2; Q_1, Q_2 \vdash \apply(m,n): U} :=  \lrb{\Phi} \otimes \lrb{\Gamma_1} \otimes \lrb{\Gamma_2} \otimes \lrb{Q_1} \otimes \lrb{Q_2}
                       \xrightarrow{\Delta \otimes\id}
                       \lrb{\Phi} \otimes \lrb{\Phi} \otimes \lrb{\Gamma_1} \otimes \lrb{\Gamma_2} \otimes \lrb{Q_1} \otimes \lrb{Q_2}
                       \xrightarrow \cong\\
                       &\qquad \lrb{\Phi}\otimes\lrb{\Gamma_1}\otimes\lrb{Q_1}\otimes\lrb{\Phi}\otimes\lrb{\Gamma_2}\otimes\lrb{Q_2}
                       \xrightarrow{\lrb{m}\otimes \lrb{n}}
                       \lrb{\Diag(T,U)} \otimes \lrb{T}
                       \xrightarrow \cong\ 
                       !(\lrb{T} \multimap \lrb{U}) \otimes \lrb{T}
                       \xrightarrow{\epsilon\otimes\id}
                       (\lrb{T} \multimap \lrb{U}) \otimes \lrb{T}
                       \xrightarrow{\ev}
                       \lrb{U}
                       \\
&\lrb{\Phi; \emptyset \vdash (\vell, S, \vell') : \Diag(T,U)} := \lrb \Phi \xrightarrow{\diamond } I \xrightarrow \cong F(1) \xrightarrow{F(\Psi(\phi(\vell, S, \vell')))} \lrb{\Diag(T,U)}
\end{align*}
\end{small}
\caption{Denotational semantics of the ECLNL calculus (excerpt)}\label{fig:semantics}
\end{figure*}

In the definition of the (diag) rule in the denotational semantics, we use a
function $\phi,$ which we now explain.  From the premises of the rule, it
follows that $\lrb \vell : \lrb{Q} \to \lrb{T}$ and
$\lrb{\vell'}: \lrb{Q'} \to \lrb{U}$ are isomorphisms.
Then, $\phi(\vell, S, \vell')$ is defined to be the morphism:
\[ \phi(\vell, S, \vell') = \lrb{T} \xrightarrow{\lrb{\vell}^{-1}} \lrb{Q} \xrightarrow{\lrb S} \lrb{Q'} \xrightarrow{\lrb{\vell'}} \lrb U. \]
\begin{theorem}\label{thm:derivations}
Let $D_1$ and $D_2$ be derivations of a judgement $\Gamma; Q \vdash m: A.$
Then $\lrb{D_1} = \lrb{D_2}.$
\end{theorem}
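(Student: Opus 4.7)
The plan is to proceed by induction on the structure of the term $m$, exploiting the fact that the last rule applied in any derivation of $\Gamma; Q \vdash m : A$ is uniquely determined by the head symbol of $m$. Consequently $D_1$ and $D_2$ end with the same rule, and the only remaining freedom lies in (i) the choice of sub-derivations for the sub-terms (handled by induction) and (ii) the partition of $\Gamma; Q$ into the shared intuitionistic context $\Phi$ and the $\Gamma_i, Q_i$ of the premises. Linear variables and labels are rigidly routed to the unique sub-term that uses them, so the ambiguity reduces entirely to how intuitionistic variables are distributed between $\Phi$ and the various $\Gamma_i$.

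To handle this, I would first prove two semantic coherence lemmas. The first is an \emph{intuitionistic weakening} lemma: if $\Gamma; Q \vdash m : A$ is derivable and $x:P$ is a fresh intuitionistic variable, then $\Gamma, x:P; Q \vdash m : A$ is derivable and its interpretation factors through the original one by composing with $\diamond_P$ on the new slot (modulo symmetries of $\otimes$). This goes by induction on the derivation, the inductive step following from monoidal naturality together with the coherence of $\diamond$ on intuitionistic types established in the Proposition before Definition~\ref{def:enriched-lnl}. The second is a \emph{promotion} lemma: in any rule with a shared intuitionistic context, moving an intuitionistic variable from some $\Gamma_i$ into $\Phi$ (or vice versa, when admissible) preserves the interpretation. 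Semantically, keeping $x:P$ in $\Gamma_i$ uses it in exactly one branch and discards it in the others via the weakening lemma, whereas placing it in $\Phi$ copies it via $\Delta_P$ and routes a copy to each branch. The equality of the two interpretations reduces to the comonoid identities $(\diamond_P \otimes \id) \circ \Delta_P = \id$ and the naturality of $\Delta$ along intuitionistic morphisms, both of which are consequences of the fact that $\lrb P \cong F(X)$ and that the comonoid structure is transported from the canonical one on $F(X)$ induced by the cartesian structure of $\VV$.

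With these lemmas in hand, the induction is then mostly bookkeeping. For each constructor I can, without changing the interpretation, normalize both $D_1$ and $D_2$ so that every intuitionistic variable appearing in both premises sits in $\Phi$ and every intuitionistic variable appearing in at most one premise sits in the corresponding $\Gamma_i$. This normal form is uniquely determined by $m$, so after applying the induction hypothesis to the sub-derivations the two global interpretations coincide. The binding rules (\emph{abs}), (\emph{let}), (\emph{let-pair}), (\emph{case}), and the ECLNL rules (\emph{box}), (\emph{apply}), (\emph{diag}) are treated in the same way, taking care that when a bound variable is intuitionistic the promotion lemma is applied in the sub-derivation.

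The main obstacle, in my estimation, is executing the promotion lemma uniformly across every typing rule of the language. The rule (\emph{lift}), in particular, requires intuitionistic contexts only in its premise, so one has to check that any intuitionistic variable originally placed in some $\Gamma_i$ above a (\emph{lift}) can be relocated without altering the semantics; this amounts to a diagrammatic calculation involving $\blift$, $\Delta$, and the comonad unit $\eta$. Once this coherence is established for each rule, the theorem follows by the normalization argument sketched above.
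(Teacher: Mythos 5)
Your proposal is correct and follows what is clearly the intended argument: the paper omits the proof for space, but the coherence properties it establishes in Section~2 (that intuitionistic morphisms commute with $\diamond$, $\Delta$, and $\blift$) are exactly the ingredients your weakening and promotion lemmas rely on, and induction on the term with normalization of the $\Phi$/$\Gamma_i$ split is the standard way to discharge the non-uniqueness of derivations. Your identification of (lift) as the case needing the $\blift$-naturality clause is also the right place to be careful.
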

Because of this theorem, we write $\lrb{\Gamma; Q \vdash m : A}$ instead
of $\lrb{D}.$

A \emph{configuration} is a pair $(S, m),$ where $S$ is a labelled string
diagram and $m$ is a term. Operationally, we may think of $S$ as the diagram
that has been constructed so far, and $m$ as the program which remains to be
executed.
\begin{definition}\label{def:config-judgement}
A configuration is said to be \emph{well-typed} with inputs $Q$, outputs $Q'$
and type $A$, which we write as $Q \vdash (S, m) : A; Q',$ if 
there exists $Q''$ disjoint from $Q'$, s.t. $S: Q \to Q'' \cup Q'$
is a labelled string diagram and $\emptyset; Q'' \vdash m: A.$
\end{definition}
Thus, in a well-typed configuration, the term $m$ has no free variables and its
labels correspond to a subset of the outputs of $S$. We interpret a
well-typed configuration $Q \vdash (S, m) : A; Q',$ by:
\[\lrb{(S,m)} := \lrb Q \xrightarrow{\lrb S} \lrb{Q''} \otimes \lrb{Q'} \xrightarrow{\lrb{\emptyset; Q'' \vdash m : A} \otimes \id} \lrb A \otimes \lrb{Q'}\]

The big-step semantics is defined on configurations; because
of space reasons, we only show an excerpt of the rules in
Figure~\ref{fig:operational-semantics}. The rest of the rules are standard.
A \emph{configuration value} is a configuration $(S,v)$, where $v$ is a value.
The evaluation relation $(S,m) \Downarrow (S',v)$ then relates configurations
to configuration values. Intuitively, this can be interpreted in the following way:
assuming a constructed diagram $S$, then evaluating term $m$ results in a
diagram $S'$ (obtained from $S$ by appending other subdiagrams described by $m$) 
and value $v.$ There's also an error relation $(S,m) \Downarrow$ Error which indicates 
that a run-time error occurs when we execute term $m$ from configuration $S$. There
are many such Error rules, but they are uninteresting, so we omit all but one of
them (also see Theorem~\ref{thm:errors}).

An excerpt of the operational semantics is presented in Figure~\ref{fig:operational}.
The evaluation rule for $\bbox_T m$ makes use of a function \emph{freshlabels}.
Given a $\M$-type $T$, freshlabels$(T)$ returns a pair $(Q, \vell)$ such that
$\emptyset; Q \vdash \vell: T$, where the labels in $\vell$ are fresh in
the sense that they do not occur anywhere else in the derivation. This
can always be done, and the resulting $Q$ and $\vell$ are determined uniquely,
up to a renaming of labels (which is inessential).

The evaluation rule for $\apply(m,n)$ makes use of a function \emph{append}.
Given a labelled string diagram $S''$ together with a label tuple $\vkay$
and term $(\vell, D, \vell')$, it is defined as follows.
Assuming that $\vell$ and $\vkay$ correspond exactly to the inputs of $D$
and that $\vell'$ contains exactly the outputs of $D$, then we may
construct a term $(\vkay, D', \vkay')$ which is equivalent to $(\vell, D, \vell')$
in the sense that they only differ by a renaming of labels. Moreover, we may do
so by choosing $D'$ and $\vkay'$ such that the labels in $\vkay'$ are fresh.
Then, assuming the labels in $\vkay$ correspond to a subset of the outputs of $S''$,
we may construct the labelled string diagram $S'''$ given by the composition:
\cstikz[0.9]{append.tikz}
Finally, append$(S'', \vkay, \vell, D, \vell')$ returns the pair
$(S''', \vkay')$ if the above assumptions are met, and is undefined otherwise
(which would result in a run-time error).

\begin{figure*}
\begin{center}
\begin{bprooftree}
  \def\defaultHypSeparation{\hskip .1in}
  \def\ScoreOverhang{0.5pt}
\AxiomC{$(S, m) \Downarrow (S', v)$}
\AxiomC{$(S', n) \Downarrow (S'', v')$}
\BinaryInfC{$(S, \langle m, n \rangle) \Downarrow (S'', \langle v, v' \rangle)$}
\end{bprooftree}
\begin{bprooftree}
  \def\defaultHypSeparation{\hskip .1in}
  \def\ScoreOverhang{0.5pt}
\AxiomC{$(S, m) \Downarrow (S', \langle v, v' \rangle)$}
\AxiomC{$(S', n[v\ /\ x, v'\ /\ y]) \Downarrow (S'', w)$}
\BinaryInfC{$(S, \llet\ \langle x, y \rangle = m\ \text{in}\ n) \Downarrow (S'', w)$}
\end{bprooftree}
\end{center}
\mbox{}\\
\begin{center}
\begin{bprooftree}
  \def\defaultHypSeparation{\hskip .1in}
  \def\ScoreOverhang{0.5pt}
\AxiomC{}
\UnaryInfC{$(S, \lift\ m) \Downarrow (S, \lift\ m)$}
\end{bprooftree}
\begin{bprooftree}
  \def\defaultHypSeparation{\hskip .1in}
  \def\ScoreOverhang{0.5pt}
\AxiomC{$(S, m) \Downarrow (S', \lift\ m')$}
\AxiomC{$(S', m') \Downarrow (S'', v)$}
\BinaryInfC{$(S,\force\ m) \Downarrow (S'', v)$}
\end{bprooftree}
\end{center}
\mbox{}\\
\begin{center}
\begin{bprooftree}
  \def\defaultHypSeparation{\hskip .1in}
  \def\ScoreOverhang{0.5pt}
\AxiomC{$(S,m) \Downarrow (S', \lift\ n)$}
\AxiomC{$\text{freshlabels}(T)=(Q,\vell)$}
\AxiomC{$(\id_{Q},n\vell) \Downarrow (D, \vell')$}
\TrinaryInfC{$(S, \bbox_Tm) \Downarrow (S', (\vell,D,\vell'))$}
\end{bprooftree}
\end{center}
\mbox{}\\
\begin{center}
\begin{bprooftree}
  \def\defaultHypSeparation{\hskip .1in}
  \def\ScoreOverhang{0.5pt}
\AxiomC{$(S,m) \Downarrow (S', (\vell, D, \vell'))$}
\AxiomC{$(S',n) \Downarrow (S'', \vkay)$}
\AxiomC{append$(S'', \vkay, \vell, D, \vell') = (S''', \vkay')$}
\TrinaryInfC{$(S, \apply(m,n)) \Downarrow (S''', \vkay')$}
\end{bprooftree}
\end{center}
\mbox{}\\
\begin{center}
\begin{bprooftree}
  \def\defaultHypSeparation{\hskip .1in}
  \def\ScoreOverhang{0.5pt}
\AxiomC{$(S,m) \Downarrow (S', (\vell, D, \vell'))$}
\AxiomC{$(S',n) \Downarrow (S'', \vkay)$}
\AxiomC{append$(S'', \vkay, \vell, D, \vell')$ undefined}
\TrinaryInfC{$(S, \apply(m,n)) \Downarrow$ Error}
\end{bprooftree}
\begin{bprooftree}
  \def\defaultHypSeparation{\hskip .1in}
  \def\ScoreOverhang{0.5pt}
\AxiomC{{\color{white} $\vkay, \vell$}}
\UnaryInfC{$(S, (\vell,D,\vell')) \Downarrow (S, (\vell,D,\vell'))$}
\end{bprooftree}
\end{center}
\caption{Operational semantics of the ECLNL calculus (excerpt)}\label{fig:operational-semantics}\label{fig:operational}
\end{figure*}

\begin{theorem}[Error freeness \cite{pqm-small}]\label{thm:errors}
If $Q \vdash (S,m) :A;Q'$ then $(S,m) \not \Downarrow$ Error.
\end{theorem}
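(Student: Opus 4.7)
The plan is to reduce error freeness to a combination of a \emph{subject reduction} lemma for configurations together with a \emph{canonical forms} lemma for values. First I would prove subject reduction: if $Q \vdash (S,m) : A; Q'$ and $(S,m) \Downarrow (S', v)$, then $Q \vdash (S', v) : A; Q'$. This proceeds by induction on the evaluation derivation, invoking the standard substitution lemma (which follows by induction on the typing derivation of the body). The nontrivial cases are those where $S$ is extended, i.e.\ $\bbox$ and $\apply$: for $\bbox_T m$ the freshness condition on the labels produced by $\text{freshlabels}(T)$ guarantees that the recursive call $(\id_Q, n\vell) \Downarrow (D, \vell')$ is launched from a well-typed configuration $Q \vdash (\id_Q, n\vell) : U; \emptyset$; for $\apply(m,n)$ one verifies that the diagram produced by $\text{append}$ has the correct input/output label contexts by the construction of the composition in the pictorial definition of $\text{append}$.

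Next I would state a canonical forms lemma for closed values: if $\emptyset; Q \vdash v : A$ and $v$ is a value, then the shape of $v$ is determined by $A$. In particular (i) if $A = \Diag(T,U)$ then $v$ must be of the form $(\vell, D, \vell')$ with $\emptyset; Q_1 \vdash \vell : T$, $\emptyset; Q_2 \vdash \vell' : U$, $Q = \emptyset$, and $D \in \M_L(Q_1, Q_2)$ for some $Q_1, Q_2$; (ii) if $A$ is an $\M$-type $T$, then $v$ is a label tuple $\vkay$ with $\emptyset; Q \vdash \vkay : T$, and the shape of the tuple $\vkay$ together with the shape of the label context $Q$ are rigidly dictated by the structure of $T$. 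Claim~(ii) is proved by induction on $T$, using the fact that the only typing rules that can conclude with an $\M$-type as the right hand side on a value are (label), (*) and (pair).

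With these in hand, the theorem follows by induction on the derivation of $(S,m) \Downarrow \text{Error}$. Each Error rule is shown to be unreachable from a well-typed configuration by combining subject reduction on its premises with canonical forms. The representative case shown in Figure~\ref{fig:operational-semantics} is $\apply$: from $Q \vdash (S, \apply(m,n)) : U; Q'$ we unwind the (apply) typing rule to get intermediate types $\emptyset; Q_m \vdash m : \Diag(T,U)$ and $\emptyset; Q_n \vdash n : T$. Subject reduction on the first two premises of the Error rule, combined with canonical forms, forces $(\vell, D, \vell')$ to have $\vell$ of the exact shape of $T$ (and thus $D$ has inputs matching $\vkay$), and forces $\vkay$ to be a label tuple whose labels lie among the outputs of $S''$. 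Hence $\text{append}(S'', \vkay, \vell, D, \vell')$ is defined, contradicting the third premise.

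The main obstacle is the rigidity statement~(ii) above: one must show that after evaluation, label tuples of $\M$-type have no latent flexibility (no lifted, abstracted or boxed subterm) that could misalign them with the inputs of $D$. Once this rigidity is in place together with subject reduction, the remaining Error rules (for $\ccase$, $\force$, $\apply$ on non-diagram head values, etc.) are discharged by identical canonical-forms reasoning at their respective types.
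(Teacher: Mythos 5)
The paper does not actually prove this theorem: it is imported verbatim from Rios and Selinger's Proto-Quipper-M paper (note the citation in the theorem header), which is legitimate here because ECLNL shares its syntax, type system and operational semantics with that language. So there is no in-paper proof to compare against; the relevant comparison is with the proof in the cited source, and your outline follows essentially the same standard recipe used there: a substitution lemma, subject reduction for configurations, canonical forms for closed values, and then an induction showing that no derivation of $(S,m)\Downarrow\text{Error}$ can have a well-typed conclusion (using the inductive hypothesis for the error-propagation rules and subject reduction plus canonical forms for the rules that raise an error directly). Your identification of the crucial point --- the rigidity of label tuples at $\M$-types, which is what makes $\mathrm{append}$ defined in the $\apply$ case --- is exactly the load-bearing lemma.

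Two small things to tighten. First, your canonical-forms claim that the only rules concluding a value at an $\M$-type are (label), (*) and (pair) silently ignores the (const) rule, which types $c$ at an arbitrary $A_c$; you must either assume constants do not inhabit wire types, $I$, $\otimes$-types of $\M$-type, or $\mathrm{Diag}(T,U)$, or else add hypotheses on how such constants evaluate. (The same caveat applies to canonical forms at $A\multimap B$, $!A$ and sum types used for the other Error rules.) Second, be explicit that in the $\apply$ case the typing of the configuration value $(S'',\vkay)$ obtained from subject reduction is what guarantees that the labels of $\vkay$ form a \emph{sub}-context of the outputs of $S''$, and that the match between $\vkay$ and the inputs of $D$ is only up to the renaming of labels that $\mathrm{append}$ itself performs; your ``rigidity'' statement should be phrased as: any two closed label tuples of the same $\M$-type $T$ have label contexts related by a unique bijective renaming induced by the tree structure of $T$. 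With those points made explicit, the argument is sound.
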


\begin{theorem}[Subject reduction \cite{pqm-small}]\label{thm:subject-reduction}
If $Q \vdash (S,m) :A;Q'$ and $(S,m) \Downarrow (S',v),$ then
$Q \vdash (S',v) :A;Q'$.
\end{theorem}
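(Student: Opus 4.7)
The plan is to prove Theorem~\ref{thm:subject-reduction} by induction on the derivation of the evaluation $(S,m) \Downarrow (S',v)$, doing case analysis on the final rule. For each case, I would first invert the typing judgement $Q \vdash (S,m):A;Q'$ (unfolding Definition~\ref{def:config-judgement}), then invert the typing derivation of $m$ to obtain typings of the immediate subterms, apply the inductive hypothesis to the premises, and finally reassemble a typing derivation for the conclusion. The routine cases (variables, constants, pairs, let, sequencing, case, abstraction, application, lift, force, left/right, $*$, $\square_C$) follow the standard pattern and rely on two preparatory lemmas that I would state and prove first.

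The first preparatory result is a \emph{substitution lemma}: if $\Phi,\Gamma_1,x:A;Q_1\vdash n:B$ and $\Phi,\Gamma_2;Q_2\vdash v:A$ with $Q_1,Q_2$ disjoint, then $\Phi,\Gamma_1,\Gamma_2;Q_1,Q_2\vdash n[v/x]:B$. Since values are already in simple normal form (by inspection of the grammar in Figure~\ref{fig:syntax}), and the typing rules allow mixed contexts, this is a straightforward induction on the derivation of $n$, with a separate clause for intuitionistic substitutions (where $v$ is typable under $\Phi;\emptyset$) to handle variables occurring in $\Phi$. The second is a \emph{label renaming lemma} stating that a type derivation remains valid under any injective renaming of labels, which is immediate since labels are treated bijectively by the (label) rule and transparently by all other rules.

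The main obstacles are the three diagram-manipulating evaluation rules. For the $(S,\lift\ m)\Downarrow(S,\lift\ m)$ and $(\vell,D,\vell')$ value rules nothing must be done beyond observing typing is unchanged. For the $(\force\ m)$ case one applies the IH twice and uses inversion of (lift). The interesting case is $(\bbox_T m)$: from the premise $\Gamma;Q\vdash\bbox_T m:\Diag(T,U)$ one inverts (box) and (lift) to get $\emptyset;\emptyset\vdash m':!(T\multimap U)$; the call freshlabels$(T)=(Q_f,\vell)$ yields $\emptyset;Q_f\vdash\vell:T$ by construction, and applying force and (app) gives $\emptyset;Q_f\vdash m'\vell:U$ (up to reduction to the underlying $n\vell$). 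Hence $(\id_{Q_f},n\vell)$ is a well-typed configuration $Q_f\vdash(\id_{Q_f},n\vell):U;\emptyset$, the IH yields a well-typed $(D,\vell')$, and one repackages it with (diag) into a value of type $\Diag(T,U)$, whose resulting configuration inherits the original $S'$ outputs.

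The hardest case is $\apply(m,n)$, where I would unfold append carefully: by IH we have well-typed $(S',(\vell,D,\vell'))$ and $(S'',\vkay)$, and append first relabels $(\vell,D,\vell')$ to $(\vkay,D',\vkay')$ with $\vkay'$ fresh (invoking the renaming lemma to preserve typing) and then composes $D'$ with $S''$ along the $\vkay$-labels. The composite $S'''$ has inputs $Q$ (the original inputs of $S$) and outputs $Q''_{\mathrm{new}}\cup Q'$, where the labels in $\vkay'$ form a new output block of the right type for $\vkay':U$; verifying this amounts to a bookkeeping argument using the functoriality of $\M_L$ and the disjointness conditions baked into append. Once this is established, $Q\vdash(S''',\vkay'):U;Q'$ is witnessed by (label)-derivations for $\vkay'$, completing the case. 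The expected principal obstacle throughout is tracking exactly which labels belong to the ``remaining'' output context $Q'$ versus the ``consumed'' context $Q''$ of Definition~\ref{def:config-judgement}; once the invariant ``evaluation moves labels from the term's context into the diagram's outputs while preserving the overall $Q \to Q''\cup Q'$ interface'' is formulated and maintained, each inductive step is mechanical.
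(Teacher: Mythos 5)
The paper does not actually prove this theorem: it is stated with the citation \cite{pqm-small} because ECLNL deliberately adopts the syntax and operational semantics of Proto-Quipper-M verbatim, so subject reduction (like error freeness) is imported wholesale from Rios and Selinger's paper rather than re-proved. There is therefore no in-paper proof to compare against; your sketch --- induction on the evaluation derivation, backed by a substitution lemma for mixed contexts and a label-renaming lemma, with the real work concentrated in the $\bbox_T$ and $\apply$ cases and in the bookkeeping of which output labels of $S$ are consumed by the term versus passed through as $Q'$ --- is exactly the shape of the argument in the cited source, and the invariant you formulate at the end is the right one. Two small remarks: in the $\bbox_T$ case no appeal to \emph{force} is needed, since inverting (lift) on the value $\lift\ n$ directly yields $\emptyset;\emptyset \vdash n : T \multimap U$ and then (app) with $\emptyset;Q_f \vdash \vell : T$ gives $\emptyset;Q_f \vdash n\vell : U$; and in the binary cases (pair, let, app, seq, apply) you should be explicit that the induction hypothesis is applied to configurations typed against a \emph{larger} residual output context (the labels reserved for the second premise are temporarily part of $Q'$), which is precisely the splitting $Q'' = Q''_1 \cup Q''_2$ forced by the $Q_1, Q_2$ partition in the typing rules.
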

With this in place, we may now show our abstract model is sound.
We remark that our abstract model is strictly more general than the one
of Rios and Selinger (cf. Section~\ref{sec:introduction}, \emph{\textbf{Related Work}}).
\begin{theorem}(Soundness)\label{thm:soundness}
If $Q \vdash (S,m) :A;Q'$ and $(S,m) \Downarrow (S',v),$ then
$\lrb{(S,m)} = \lrb{(S',v)}$.
\end{theorem}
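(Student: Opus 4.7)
The plan is to proceed by induction on the derivation of $(S,m) \Downarrow (S',v)$, with case analysis on the final evaluation rule. Before doing so I would establish two auxiliary facts. The first is a \emph{substitution lemma}: if $\Phi, \Gamma_1, x{:}A; Q_1 \vdash n : B$ and $\Phi, \Gamma_2; Q_2 \vdash v : A$, then $\lrb{\Phi,\Gamma_1,\Gamma_2; Q_1,Q_2 \vdash n[v/x] : B}$ equals the composite of $\lrb n$ with $\lrb v$ in the appropriate slot (after applying $\Delta$ to share $\Phi$). This goes by a routine induction on the derivation of $n$ and should be stated separately. The second is the trivial observation that values are stable under evaluation: any derivation of $(S,v) \Downarrow (S',w)$ is forced to use one of the three value-at-itself rules, so $S'=S$, $w=v$, and $\lrb{(S,v)} = \lrb{(S',w)}$.

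With these in hand, the traditional $\lambda$-calculus cases are mechanical. For (let), (let-pair), (case), (seq), (app), and (force), each reduction step is a combination of (i) an application of the substitution lemma to a subterm, and (ii) a bookkeeping chase through the denotation of the redex, using the strong monoidal structure of $F$, the counit $\epsilon$, the evaluation map of $\multimap$, and the fact that $\lrb{\Delta}$-duplication of intuitionistic contexts commutes with morphisms out of $F(X)$. The constructor cases (pair), (left), (right), (lift) are immediate from the inductive hypotheses since the constructor commutes through its interpretation; and the trivial axiom rules (var), (label), ($\ast$), (const), (diag) are handled by the value-stability observation above.

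The genuinely new work is in the two diagram-manipulating rules. For (apply), the hypotheses give $(S,m) \Downarrow (S',(\vell,D,\vell'))$, $(S',n) \Downarrow (S'',\vkay)$, and $\mathrm{append}(S'',\vkay,\vell,D,\vell') = (S''',\vkay')$. Applying the IH twice reduces the goal to a categorical identity stating that the \textbf{append} operation corresponds to composition in $\M_L$ followed by tensoring with the unused outputs, and that evaluating $\ev \circ (\epsilon \otimes \id)$ against a term of the form $F(\Psi(\phi(\vell,D,\vell')))$ precisely recovers $\phi(\vell,D,\vell')$ (this unfolds by the definition of $\Psi$ and the triangle identities for $F\dashv G$, together with the isomorphism $!(T\multimap U)\cong F(\CCE(\lrb T, \lrb U))$). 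The label-renaming up to isomorphism that \textbf{append} performs is absorbed by the fact that $\lrb{-}_\M : \M_L \to \M$ is full and faithful, so renamed label contexts have canonically equal interpretations.

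The main obstacle will be the (box) case. Here $(S,\bbox_T m) \Downarrow (S',(\vell,D,\vell'))$ is built from $(S,m) \Downarrow (S',\lift\ n)$, $\mathrm{freshlabels}(T)=(Q,\vell)$, and $(\id_Q, n\vell) \Downarrow (D,\vell')$. The IH on the first premise controls $\lrb{(S,m)}$, but one still has to show that $\phi(\vell,D,\vell') = \lrb{\emptyset; \emptyset \vdash n: T \multimap U}$ composed appropriately; equivalently, that running $n$ on fresh labels reconstructs its denotation as a labelled diagram. I would handle this by first proving a sharper auxiliary claim: for any well-typed configuration $Q \vdash (\id_Q, p) : U; Q'$ with $(\id_Q, p)\Downarrow(D,\vkay)$, one has $\lrb{\vkay}\circ\lrb{D} = \lrb{\emptyset;Q\vdash p:U}$. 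This auxiliary claim itself goes by the same induction on derivations but packages the essential content of the (box) case cleanly; instantiating it at $p = n\vell$, and then using the definition of $\phi$ together with $\lrb{\vell}$ being an isomorphism $\lrb Q \xrightarrow{\cong} \lrb T$, yields $F(\Psi(\phi(\vell,D,\vell'))) = \lrb{\lift\ n}$ composed with the coherence isomorphism in the definition of $\lrb{\bbox_T m}$, completing the case.
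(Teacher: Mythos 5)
The paper states Theorem~\ref{thm:soundness} without giving a proof in this version (it leans on the fact that the syntax and operational semantics are those of Proto-Quipper-M, so the proof is the standard one adapted to the abstract model), so there is no in-text argument to compare against line by line. Your outline is the expected route and I see no gap in it: induction on the evaluation derivation, a substitution lemma, value stability, and the real content concentrated in the (box) and (apply) cases, where one must relate $\mathrm{append}$ to composition in $\M_L$ and show that $\ev\circ(\epsilon\otimes\id)$ against $F(\Psi(\phi(\vell,D,\vell')))$ recovers $\phi(\vell,D,\vell')$ via the triangle identities and the isomorphism $!(T\multimap U)\cong F(\CCE(\lrb T,\lrb U))$. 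Two remarks. First, your ``sharper auxiliary claim'' for (box) does not need its own induction: the premise $(\id_Q,n\vell)\Downarrow(D,\vell')$ is a sub-derivation, and once subject reduction (Theorem~\ref{thm:subject-reduction}) guarantees the configuration $Q\vdash(\id_Q,n\vell):U;\emptyset$ is well-typed, the claim $\lrb{\vell'}\circ\lrb{D}=\lrb{\emptyset;Q\vdash n\vell:U}$ is literally the induction hypothesis for that premise; you should make the reliance on subject reduction for well-typedness of intermediate configurations explicit throughout. Second, the substitution lemma is not entirely ``routine'' in a mixed-context calculus: when the substituted variable is intuitionistic it may be copied or discarded by $\Delta$ and $\diamond$, so you need the supporting fact that a closed value of intuitionistic type (necessarily with empty label context) denotes an intuitionistic morphism, so that it commutes with copy and discard as in the Proposition following the definition of $\diamond_P$, $\Delta_P$, $\blift_P$. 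Neither point is a flaw in the approach, but both must appear in a complete write-up.
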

\subsection{A constructive property}\label{sub:construct}
If we assume, in addition, that $E:\M\to\CC$ is fully faithful, then setting
$\ME(T,U):=\CCE(ET,EU)$ for $T,U\in\M$ defines a $\VV$-enriched category $\ME$
with the same objects as $\M$, and whose underlying category is isomorphic to
$\M$. Moreover, $E$ enriches to a fully faithful $\VV$-functor $\underline E: \ME\to\CCE$. As a
consequence, our abstract model enjoys the following constructive property:
\begin{align*}
& \CC(\lrb \Phi, \lrb T \multimap \lrb U ) \cong \CC(F(X), \lrb T \multimap \lrb U) \cong\\
& \VV(X, G(\lrb T \multimap \lrb U)) \cong \VV(X, \CCE(\lrb T, \lrb U)) \cong\\
& \VV(X, \CCE( \underline E\lrb T_\M, \underline E\lrb U_\M)) = \VV(X, \ME(\lrb T_\M, \lrb U_\M))
\end{align*}
where we use the additional structure only in the last step. This means that
any well-typed term
$\Phi; \emptyset \vdash m: T \multimap U$
corresponds to a $\VV$-parametrised family of string diagrams. For example,
if $\VV=\Set$ (or $\VV=\dcpo$), then we get precisely a (Scott-continuous)
function from $X$ to $\ME(\lrb T_\M, \lrb U_\M)$ or in other words, a (Scott-continuous) family of
string diagrams from $\M.$
\subsection{Concrete Models}
The original concrete model of Rios and Selinger is now  easily recovered as an
instance of our abstract model:
\cstikz{original-model.tikz}
where $\mathbf{Fam(-)}$ is the well-known \emph{families construction}.
However, our abstract treatment of the language allows us to present a
simpler sound model:
\cstikz{simple-concrete-model.tikz}
And, an order-enriched model is given by:
\cstikz{simple-ordered-model.tikz}
where $\ME$ is the free $\dcpo$-enrichment of $\M$ (obtained by discretely
ordering its homsets) and $\dcpoe$ is the self-enrichment of $\dcpo.$
\section{The ECLNL calculus with recursion}\label{sec:recursion}
Additional structure for Benton's LNL models needed to support recursion was 
discussed by Benton and Wadler in~\cite{benton-wadler}. This structure allows 
them to model recursion in related lambda calculi, and in the LNL calculus 
(renamed the "adjoint calculus") as well. However, they present no syntax or operational semantics for 
recursion in their LNL calculus 
and instead they "\emph{\ldots omit the rather messy details}".
Here we extend both the CLNL and ECLNL calculi with recursion in a simple
way by using exactly the same additional semantic structure they use.
We conjecture the simplicity of our extension is due to our use of a single 
type of judgement that employs mixed contexts; this is the main distinguishing feature 
of our CLNL calculus compared to the LNL calculus of Benton and Wadler.
Furthermore, we 
also include a computational adequacy result for the CLNL calculus with recursion.
\subsection{Extension with recursion}

We extend the ECLNL calculus by adding the term
$\rec\ x^{!A}. m$ and we add an additional typing rule (left) and an evaluation 
rule (right) as follows:
\[
  \begin{bprooftree}
  \def\ScoreOverhang{0.5pt}
  \AxiomC{$\Phi, x: !A; \emptyset \vdash m : A$}
  \RightLabel{(rec)} \UnaryInfC{$\Phi; \emptyset \vdash \rec\ x^{!A}. m : A$}
  \end{bprooftree}
  \begin{bprooftree}
  \def\ScoreOverhang{0.5pt}
  \AxiomC{$(S, m[\lift\ \rec\ x^{!A}. m\ /\ x]) \Downarrow (S', v)$}
  \UnaryInfC{$(S, \rec\ x^{!A}. m) \Downarrow (S', v)$}
  \end{bprooftree}
\]
Notice that in the typing rule, the label contexts are empty and all free
variables in $m$ are intuitionistic.
As a special case, the CLNL calculus also can be extended with recursion:
\[
  \begin{bprooftree}
  \AxiomC{$\Phi, x: !A \vdash m : A$}
  \RightLabel{(rec)} \UnaryInfC{$\Phi \vdash \rec\ x^{!A}. m : A$}
  \end{bprooftree}
  \quad
  \begin{bprooftree}
  \AxiomC{$m[\lift\ \rec\ x^{!A}. m\ /\ x] \Downarrow v$}
  \UnaryInfC{$\rec\ x^{!A}. m \Downarrow v$}
  \end{bprooftree}
\]
In both cases, (parametrised) algebraic compactness of the $!$-endo\-functor is what is 
needed to soundly model the extension; Benton and Wadler make the same assumption.

\begin{definition}\label{def:parametric-alg-cpt}
An endofunctor $T: \mathbf C \to \mathbf C$ is \emph{algebraically compact} if
$T$ has an initial $T$-algebra $T(\Omega) \xrightarrow \omega \Omega$ for which
${\Omega \xrightarrow{\omega^{-1}} T(\Omega)}$ is a final $T$-coalgebra.
If the category $\mathbf C$ is monoidal, then
an endofunctor $T: \mathbf C \to \mathbf C$ is \emph{parametrically algebraically compact} if
the endofunctor $A \otimes T(-)$ is algebraically compact for every $A \in \mathbf C$.
\end{definition}
We note that this notion of parametrised algebraic compactness is weaker than Fiore's 
corresponding notion~\cite{fiore-thesis}, but it suffices for our purposes.
This allows us to extend both ECLNL and CLNL models with recursion in the same way.
\begin{definition}
A model of the (E)CLNL calculus with recursion is given by a model of the (E)CLNL calculus 
for which  the !-endofunctor is parametrically algebraically compact.
\end{definition}
Benton and Wadler point out that if $\mathbf C$ is symmetric monoidal closed,
then algebraic compactness of ! implies that it also is parametrically
algebraically compact. Nevertheless, we include parametric algebraic
compactness in our definition to emphasize that this is exactly what is needed 
to interpret recursion in our models.

If  $\Phi \in \mathbf C$ is an intuitionistic object, then the endofunctor
$\Phi \otimes !(-)$ is algebraically compact. Let
$\Phi \otimes !\Omega_{\Phi} \xrightarrow{\omega_{\Phi}} \Omega_{\Phi}$ be
its initial algebra and let $m: \Phi \otimes !A \to A$ be an arbitrary
morphism. We define $\gamma_{\Phi}$ and $\sigma_m$ to be the unique
anamorphism and catamorphism, respectively, such that the diagram in Figure~\ref{fig:recursion} commutes.
\begin{figure}
\cstikz{alg-compactness1.tikz}
\caption{Definition of $\sigma_m$ and $\gamma_\Phi$.}\label{fig:recursion}
\end{figure}
Using this notation, we extend the denotational semantics to interpret
recursion by adding the rule:
\[\lrb{\Phi; \emptyset \vdash \rec\ x^{!A}. m : A} := \sigma_{\lrb m} \circ \gamma_{\lrb \Phi}.\]
Observe that when $\Phi = \emptyset$, we get:
\[\lrb{\rec\ x^{!A}.m} = \lrb m \circ !\lrb{\rec\ x^{!A}. m} \circ \blift = \lrb m \circ \lrb{\lift\ \rec\ x^{!A}. m} \]
which is precisely a \emph{linear fixpoint} in the sense of Braüner~\cite{brauner}.
\begin{theorem}\label{thm:recursion}
Theorems~\ref{thm:derivations} -- \ref{thm:soundness} from the previous section remain 
true for the (E)CLNL calculus extended with recursion.
\end{theorem}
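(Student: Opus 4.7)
The plan is to extend each of Theorems~\ref{thm:derivations}--\ref{thm:soundness} by structural induction, adding a single new case for the (rec) rule; the cases for the previously existing rules are inherited verbatim. I treat the four results in turn.

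For Theorem~\ref{thm:derivations} (uniqueness of the interpretation across derivations) the new case is straightforward. Any two derivations of $\Phi;\emptyset\vdash \rec\ x^{!A}.m:A$ must both end with the (rec) rule, since no other rule produces a conclusion of this shape (empty label context, purely intuitionistic $\Phi$, a recursion term). Their premises derive $\Phi,x:!A;\emptyset\vdash m:A$, and the induction hypothesis gives equal premise interpretations; since the (rec) clause defines the denotation as $\sigma_{\lrb m}\circ \gamma_{\lrb\Phi}$ where the catamorphism $\sigma_{(-)}$ depends functionally on $\lrb m$, equality is preserved. For Theorem~\ref{thm:errors} (error freeness) and Theorem~\ref{thm:subject-reduction} (subject reduction), the only new evaluation rule unfolds $\rec\ x^{!A}.m$ to $m[\lift\ \rec\ x^{!A}.m\,/\,x]$. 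A standard syntactic substitution lemma shows that if $\Phi;\emptyset\vdash \rec\ x^{!A}.m:A$, then (lift) on top of (rec) gives $\Phi;\emptyset\vdash \lift\ \rec\ x^{!A}.m:!A$, and substituting this intuitionistic value into the typed premise $\Phi,x:!A;\emptyset\vdash m:A$ preserves typing. The respective induction hypotheses applied to the premise of the new evaluation rule then yield both theorems in the (rec) case.

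For Theorem~\ref{thm:soundness} the argument requires two ingredients. First, a semantic substitution lemma: for any closed intuitionistic value $v$ of type $!A$,
\[
  \lrb{\Phi;\emptyset\vdash m[v/x]:A} \;=\; \lrb m \circ (\id_{\lrb\Phi}\otimes \lrb v)\circ(\id_{\lrb\Phi}\otimes \blift_{\lrb\Phi})\circ\Delta_{\lrb\Phi},
\]
modulo canonical structural isomorphisms; this is proved by induction on $m$, using the cartesian (copy/discard) structure of $\Phi$-variables and naturality of $\blift$ exactly as in the proof of soundness for the non-recursive language. Second, the parametric fixed-point identity
\[
  \sigma_{\lrb m}\circ \gamma_{\lrb\Phi} \;=\; \lrb m\circ \bigl(\id_{\lrb\Phi}\otimes\, !(\sigma_{\lrb m}\circ \gamma_{\lrb\Phi})\bigr)\circ (\id_{\lrb\Phi}\otimes \blift_{\lrb\Phi})\circ \Delta_{\lrb\Phi},
\]
which is read off from the commuting diagram of Figure~\ref{fig:recursion} by combining the algebra law $\sigma_{\lrb m}\circ \omega_{\lrb\Phi}=\lrb m\circ(\id\otimes !\sigma_{\lrb m})$ with the definition of the anamorphism $\gamma_{\lrb\Phi}$. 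Instantiating the substitution lemma at $v := \lift\ \rec\ x^{!A}.m$, whose denotation is $!(\sigma_{\lrb m}\circ \gamma_{\lrb\Phi})\circ \blift_{\lrb\Phi}$ by the (lift) clause and the defining equation for $\lrb{\rec\ x^{!A}.m}$, the two displayed identities together give
\[\lrb{\Phi;\emptyset\vdash \rec\ x^{!A}.m:A}\;=\;\lrb{\Phi;\emptyset\vdash m[\lift\ \rec\ x^{!A}.m\,/\,x]:A}.\]
Combining this with the inductive hypothesis applied to the single premise of the new evaluation rule completes soundness in the (rec) case, and consequently both for CLNL-with-recursion and for ECLNL-with-recursion (the labelled-configuration interpretation being unaffected since the evaluation of $\rec$ does not touch the string-diagram component).

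The main obstacle is the parametric fixed-point identity above: one must carefully unfold the definitions of $\omega_{\lrb\Phi}$, the catamorphism $\sigma_{\lrb m}$, and the anamorphism $\gamma_{\lrb\Phi}$ and verify that the structural bookkeeping (the diagonals, $\blift$s, and $F\dashv G$ units that arise from the intuitionistic parameters $\Phi$) matches on both sides; this is essentially the point at which one recovers, in the parametric setting, the linear fixpoint equation noted in the remark following the recursion semantics. Everything else is a routine lifting of the non-recursive proofs through an added inductive case.
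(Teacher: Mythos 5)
The paper states Theorem~\ref{thm:recursion} without giving a proof, but your argument is the evident intended one: each of Theorems~\ref{thm:derivations}--\ref{thm:soundness} gains a single new inductive case for (rec), and your key parametric fixed-point identity $\sigma_{\lrb m}\circ\gamma_{\lrb\Phi}=\lrb m\circ(\id\otimes\,!(\sigma_{\lrb m}\circ\gamma_{\lrb\Phi}))\circ(\id\otimes\blift)\circ\Delta$ does follow from the catamorphism and anamorphism equations of Figure~\ref{fig:recursion} and specializes exactly to the linear fixpoint equation the paper records after the (rec) clause. The only blemishes are cosmetic: your substitution lemma calls $v=\lift\ \rec\ x^{!A}.m$ ``closed'' although it is typed in context $\Phi$, and its displayed composite only typechecks once $\lrb v$ is unfolded as $!(\sigma_{\lrb m}\circ\gamma_{\lrb\Phi})\circ\blift_{\lrb\Phi}$; note also that since (rec) forces an empty label context and configuration terms have no free variables, the soundness case in fact only exercises the $\Phi=\emptyset$ instance of your identity.
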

\subsection{Concrete Models}
Let $\dcpo$ be the category of cpo's (possibly without bottom) and
Scott-continuous functions, and let $\dcpobs$ be the category of \emph{pointed}
cpo's and \emph{strict} Scott-continuous functions.

We present a concrete model for an arbitrary symmetric monoidal $\M$.  Let $\ME$
be the free $\dcpo$-enrichment of $\M$. Then $\ME$ has the same objects as $\M$
and hom-cpo's $\ME(A,B)$ given by the hom-sets $\M(A,B)$ equipped with the
discrete order. $\ME$ is then a $\cpo$-symmetric monoidal category with the
same monoidal structure as $\M$.

Let $\ME_\perp$ be the free $\dcpobs$-enrichment of $\M$. Then, $\ME_\perp$ has
the same objects as $\M$ and hom-cpo's $\ME_\perp(A,B) = \ME(A,B)_\perp,$
where $(-)_\perp: \dcpoe \to \dcpobse$ is the domain-theoretic lifting functor.
$\ME_\perp$ is then a $\dcpobs$-symmetric monoidal category with the same
monoidal structure as that of $\ME$ where, in addition, $\perp_{A,B}$ satisfies the conditions of
Proposition~\ref{prop:brauner} (see Section~\ref{subsec:adequate} below).

By using the enriched Yoneda lemma together with the Day convolution monoidal
structure, we see that the enriched functor category $[\ME_\perp^{\text{op}}, \dcpobse]$
is $\dcpobs$-symmetric monoidal closed.
\begin{theorem}\label{thm:adequate-concrete}
The following data:
\cstikz{concrete-adequate-model.tikz}
is a sound model of the ECLNL calculus extended with recursion.
\end{theorem}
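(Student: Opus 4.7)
The plan is to verify in turn the requirements of an ECLNL model with recursion for the given data. First, we must check that we have an enriched CLNL model in the sense of Definition~\ref{def:enriched-lnl}; second, that there is a strong symmetric monoidal functor $E \colon \M \to \CC$ into the underlying category of the $\VV$-SMCC; third, that the induced $!$-endofunctor is parametrically algebraically compact in the sense of Definition~\ref{def:parametric-alg-cpt}.

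For (i), I would take $\VV := \dcpo$, which is well known to be a CCC with finite coproducts, so its self-enrichment $\dcpoe$ is a $\dcpo$-SMCC with finite $\dcpo$-coproducts. Set $\CCE := [\ME_\perp^{\op}, \dcpobse]$. Since $\ME_\perp$ is a $\dcpobs$-symmetric monoidal category, the Day convolution equips $\CCE$ with a $\dcpobs$-symmetric monoidal closed structure; a change of base along the monoidal inclusion $\dcpobs \to \dcpo$ (or more precisely along the monoidal functor underlying the adjunction between $\dcpo$ and $\dcpobs$) then transfers this to a $\dcpo$-symmetric monoidal closed structure. The $\dcpo$-copowers and finite $\dcpo$-coproducts in $\CCE$ are computed pointwise from those in $\dcpobse$, which exist since $\dcpobs$ is a (cocomplete) $\dcpo$-category. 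The required $\dcpo$-adjunction between $\dcpoe$ and $\CCE$ is the composition of the free/forgetful adjunction between $\dcpoe$ and $\dcpobse$ (lifting, with the forgetful going back) with the enriched Yoneda adjunction; concretely, $F = -\odot I$ sends a cpo $X$ to the pointwise lift $X \odot I$ and $G = \CCE(I,-)$ evaluates at the tensor unit. Checking that this $\dcpo$-adjunction is symmetric monoidal reduces by the general theory to the fact that $F$ is strong monoidal, which follows from $I$ being the Day unit together with $(-)_\perp$ being strong monoidal as a functor $\dcpoe \to \dcpobse$. The underlying adjunction then automatically constitutes a CLNL model.

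For (ii), the strong symmetric monoidal functor $E \colon \M \to \CC$ is obtained as the composite $\M \to \ME_\perp \hookrightarrow \CCE$, where the second map is the enriched Yoneda embedding (which is strong symmetric monoidal thanks to the Day convolution structure). Fullness and faithfulness of the Yoneda embedding and of the map $\M \to \ME_\perp$ even yield the constructivity property discussed in Subsection~\ref{sub:construct}.

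For (iii), I need to prove that $!(-) = F \circ G(-) = (\CCE(I,-))_\perp \odot I$ is parametrically algebraically compact. The main obstacle is here. The standard strategy is to exploit the $\dcpobs$-enrichment of the relevant subcategory: functors that factor through $\dcpobse$ are automatically locally continuous and strict, so by the limit-colimit coincidence (in the style of Smyth--Plotkin and Fiore) the initial algebra of any locally continuous endofunctor on a $\dcpobs$-category with finite products and colimits of $\omega$-chains of embedding/projection pairs coincides with the final coalgebra. I would show that $A \otimes !(-)$ is such an endofunctor on the $\dcpobs$-enriched subcategory of $\CC$ on objects of the form $!B$ (equivalently, free algebras), and that the initial $e$--$p$-chain for it exists and has the required colimit; the bottom morphisms supplied by Proposition~\ref{prop:brauner} ensure that the embeddings can be constructed. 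This yields algebraic compactness of $A \otimes !(-)$ for every $A$, which is precisely parametric algebraic compactness of $!$.

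Combining these three verifications with Theorem~\ref{thm:recursion} then gives soundness of the model for the full ECLNL calculus with recursion, completing the proof.
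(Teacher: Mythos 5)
Your overall decomposition---(i) an enriched CLNL model via Day convolution on $[\ME_\perp^{\op},\dcpobse]$ with the adjunction factored through $\dcpobse$ as $(-\bullet I)\circ(-)_\perp$, (ii) $E$ obtained as the enriched Yoneda embedding composed with $\M\to\ME_\perp$, and (iii) parametric algebraic compactness of the $!$-endofunctor---is exactly the paper's, which handles (iii) by citing Fiore's thesis rather than re-deriving the limit--colimit coincidence.

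The one step that would fail as written is in (iii): you propose to realise $A\otimes{!(-)}$ as an endofunctor of ``the $\dcpobs$-enriched subcategory of $\CC$ on objects of the form $!B$'' and to build the initial algebra there. That subcategory is not closed under $A\otimes{!(-)}$ (its values $A\otimes{!B}$ are not of the form $!B'$), and the invariant object $\Omega\cong A\otimes{!\Omega}$ is likewise not a free object, so the embedding--projection chain cannot be formed inside it. No such restriction is needed: the whole of $\CC=[\ME_\perp^{\op},\dcpobse]$ is already a $\dcpobs$-category (hom-objects are pointed cpos, composition is strict and continuous), it has a zero object and colimits of $\omega$-chains of embeddings, and $A\otimes{!(-)}$ is locally continuous on it; this is precisely the setting of the result in Fiore's thesis that the paper invokes. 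With that repair your argument goes through. Two minor quibbles: in your own notation $!$ is $(\CCE(I,-))_\perp\bullet I$ rather than $\odot$; and local \emph{strictness} of the endofunctor is not required for the coincidence---local continuity on a $\dcpobs$-category suffices.
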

\begin{proof}
The subcategory inclusion
$\ME \hookrightarrow \ME_\perp$ is $\dcpo$-enriched, faithful and strong symmetric monoidal, 
as is the enriched Yoneda embedding $Y$. The $\dcpo$-copower $(- \odot I)$ is given by:
\[(- \odot I) = (- \bullet I) \circ (-)_\perp,\]
where
$(- \bullet I) : \dcpobse \to [\ME_\perp^{\text{op}}, \dcpobse]$ is the
$\dcpobs$-copower with the tensor unit (see~\cite{borceux:handbook2}). This follows because the
right adjoint and the adjunction factor through $\dcpobse$.
Parametrised algebraic compactness of the !-endofunctor follows from~\cite[pp.
161-162]{fiore-thesis}.
\end{proof}
Moreover, the concrete model enjoys a constructive property similar to the one in Subsection~\ref{sub:construct}.
Using the same argument, if $\Phi; \emptyset \vdash m : T \multimap U,$ then we obtain:
\begin{align*}
[\ME_\perp^{\text{op}}, \dcpobse](\lrb \Phi, \lrb T \multimap \lrb U )
\cong
\dcpoe(X, \ME_\perp(\lrb T_\M, \lrb U_\M))
\end{align*}
Therefore, the interpretation of $m$ corresponds to a Scott-continuous function from $X$ to
$\ME_\perp(\lrb T_\M, \lrb U_\M).$ In other words, this is a family of
\emph{string diagram computations}, in the sense that every element is either a
string diagram of $\M$ or a non-terminating computation.
\begin{theorem}\label{thm:clnl-model}
The CLNL model
\stikz{clnl-corollary.tikz},
where $U$ is the forgetful functor,
is a sound model for the CLNL calculus with recursion.
\end{theorem}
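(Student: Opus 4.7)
The plan is to verify that the displayed data forms a CLNL model with recursion in the sense of Definitions~\ref{def:lnl} and~\ref{def:parametric-alg-cpt}, and then invoke Theorem~\ref{thm:recursion} to conclude soundness. There is no substantive new construction to make; the argument is an assembly of standard facts together with a single nontrivial appeal to the literature.

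First, I would confirm the underlying categorical structure. The category $\dcpo$ is well-known to be cartesian closed with finite coproducts (continuous function spaces for exponentials, disjoint unions for coproducts). The category $\dcpobs$ of pointed cpo's and strict continuous maps is symmetric monoidal closed via the smash product and the strict continuous function space, and carries finite coproducts (the one-point cpo and smash coproducts). The domain-theoretic lifting $(-)_\perp\colon \dcpo \to \dcpobs$ is strong symmetric monoidal and is left adjoint to the forgetful functor $U$, so that $(-)_\perp \dashv U$ is a symmetric monoidal adjunction. Together this gives a CLNL model in the sense of Definition~\ref{def:lnl}.

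Second, I would argue that the induced $!$-endofunctor $(-)_\perp \circ U$ on $\dcpobs$ is parametrically algebraically compact. This is the same verification already needed (in a richer enriched context) inside the proof of Theorem~\ref{thm:adequate-concrete}, and again reduces to an appeal to~\cite{fiore-thesis} (pp.~161--162). Parametric algebraic compactness is a property of the $!$-endofunctor on $\dcpobs$ alone, so it transfers without modification from the setting of Theorem~\ref{thm:adequate-concrete} to the present, simpler, CLNL model.

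Finally, soundness for the CLNL calculus with recursion follows by direct application of Theorem~\ref{thm:recursion} to this model. The only step with real mathematical content is parametric algebraic compactness, and I anticipate no obstacle there since the heavier analogue was already handled in Theorem~\ref{thm:adequate-concrete}; the remaining checks are standard bookkeeping about $\dcpo$ and $\dcpobs$.
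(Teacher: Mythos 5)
Your proposal is correct and follows essentially the same route as the paper: the paper's own proof consists precisely of noting that parametric algebraic compactness of the $!$-endofunctor follows from~\cite[pp. 161-162]{fiore-thesis}, with the remaining structure (the symmetric monoidal adjunction $(-)_\perp \dashv U$ between $\dcpo$ and $\dcpobs$ forming a CLNL model, and the appeal to Theorem~\ref{thm:recursion}) left as the standard bookkeeping you spell out. The only nitpick is terminological: the coproduct in $\dcpobs$ is the coalesced sum rather than a ``smash coproduct,'' but this does not affect the argument.
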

\begin{proof}
Again, parametrised algebraic compactness of the !-endofunctor follows
from~\cite[pp.  161-162]{fiore-thesis}.
\end{proof}

\subsection{Computational adequacy}\label{subsec:adequate}
In this subsection we show that computational adequacy holds at
intuitionistic types for the concrete CLNL model given in the previous subsection.

We begin by showing that in any (E)CLNL model with recursion,
the category $\CC$ is pointed, which allows us to introduce a notion of
undefinedness. Towards that end, we first introduce a slightly weaker notion,
following Braüner~\cite{brauner}.

\begin{definition}\label{def:weaklypointed}
A symmetric monoidal closed category is \emph{weakly pointed} if it is equipped with
a morphism $\perp_A:I\to A$ for each object $A$, such that for every morphism $h: A \to B$, 
we have $h\ \circ\perp_A=\perp_B.$
In this case, for each pair of objects $A$ and
$B$, there is a morphism $\perp_{A,B}= 
A \xrightarrow{\lambda_A^{-1}} I \otimes A \xrightarrow{\mathbf{uncurry}(\perp_{A\multimap B})} B.$
\end{definition}

\begin{proposition}[\cite{brauner}]\label{prop:brauner}
	Let $\mathbf A$ be a weakly pointed category. Then:
	\begin{enumerate}
		\item $f\ \circ\perp_{A,B}=\perp_{A,C}$ for each morphism $f:B\to C$;
		\item $\perp_{B,C}\circ\ f=\perp_{A,C}$ for each morphism $f:A\to B$;
		\item $\perp_{A,B}\otimes f=\perp_{A\otimes C,B\otimes D}$ for each morphism $f:C\to D$.
		\item $f \otimes \perp_{A,B} = \perp_{C\otimes A,D\otimes B}$ for each morphism $f:C\to D$.
	\end{enumerate}
\end{proposition}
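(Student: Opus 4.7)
The plan is to exploit the fact that the family $\perp_A\colon I \to A$ is, by the stated hypothesis, a natural transformation from the constant $I$-valued functor to the identity functor on $\mathbf A$; that is, $h\circ \perp_X = \perp_Y$ for every morphism $h\colon X\to Y$. All four items then follow by unfolding the definition $\perp_{A,B} = \mathrm{ev}_{A,B}\circ(\perp_{A\multimap B}\otimes \mathrm{id}_A)\circ\lambda_A^{-1}$ and applying the naturality property to a carefully chosen internal morphism between function objects.

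For (1), I would apply the hypothesis to the internal post-composition map $A\multimap f\colon (A\multimap B)\to(A\multimap C)$, which is characterised by $f\circ\mathrm{ev}_{A,B} = \mathrm{ev}_{A,C}\circ((A\multimap f)\otimes\mathrm{id}_A)$. Substituting $(A\multimap f)\circ\perp_{A\multimap B} = \perp_{A\multimap C}$ into the unfolded form of $f\circ\perp_{A,B}$ immediately collapses the composite to $\perp_{A,C}$. For (2), first use naturality of $\lambda^{-1}$ to rewrite $\lambda_B^{-1}\circ f = (\mathrm{id}_I\otimes f)\circ\lambda_A^{-1}$; then apply the hypothesis to the internal pre-composition map $f\multimap C\colon (B\multimap C)\to(A\multimap C)$, using the identity $\mathrm{ev}_{A,C}\circ((f\multimap C)\otimes\mathrm{id}_A) = \mathrm{ev}_{B,C}\circ(\mathrm{id}_{B\multimap C}\otimes f)$ to rewrite the result as $\perp_{A,C}$.

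For (3), I would first prove the intermediate identity $\perp_{A,B}\otimes\mathrm{id}_C = \perp_{A\otimes C,\,B\otimes C}$; once this is in hand, bifunctoriality of $\otimes$ gives $\perp_{A,B}\otimes f = (\mathrm{id}_B\otimes f)\circ(\perp_{A,B}\otimes\mathrm{id}_C)$, and applying (1) to the morphism $\mathrm{id}_B\otimes f$ finishes the argument. The intermediate identity is established by applying the hypothesis to the internal tensor morphism $\tau_C\colon (A\multimap B)\to((A\otimes C)\multimap(B\otimes C))$, obtained by currying $(\mathrm{ev}_{A,B}\otimes\mathrm{id}_C)\circ\alpha^{-1}$: this gives $\tau_C\circ\perp_{A\multimap B} = \perp_{(A\otimes C)\multimap(B\otimes C)}$, after which naturality of $\alpha^{-1}$ together with the coherence identity $\alpha^{-1}_{I,A,C}\circ\lambda^{-1}_{A\otimes C} = \lambda_A^{-1}\otimes\mathrm{id}_C$ identifies the two composites. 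Part (4) follows either by an entirely symmetric argument using the bifunctor $(D\otimes -)$, or more economically by deducing it from (3) via the symmetry $\sigma$ and parts (1) and (2).

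The main obstacle is the coherence bookkeeping required for the intermediate identity in (3): one has to commute associators and two different left unitors past $\perp_{A\multimap B}\otimes\mathrm{id}_{A\otimes C}$, and it is easy to get the direction of an isomorphism wrong. None of this is deep — every step is an instance of naturality or of MacLane coherence — but writing it out rigorously is where the bulk of the work lies. By contrast, items (1) and (2) are essentially one-line computations once the appropriate internal-hom functor has been identified.
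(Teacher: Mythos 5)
Your proposal is correct. Note, though, that the paper offers no proof of this proposition at all: it is stated as a citation to Braüner, so there is no in-paper argument to compare against. Your derivation is the standard one and matches Braüner's: unfold $\perp_{A,B}=\mathrm{ev}_{A,B}\circ(\perp_{A\multimap B}\otimes\mathrm{id}_A)\circ\lambda_A^{-1}$ and, for each item, apply the hypothesis $h\circ\perp_X=\perp_Y$ to the internalization of the relevant operation (post-composition $A\multimap f$, pre-composition $f\multimap C$, and the strength-like map $\tau_C$), with naturality of $\lambda$, $\alpha$ and the Kelly coherence identity $\lambda_{A\otimes C}\circ\alpha_{I,A,C}=\lambda_A\otimes\mathrm{id}_C$ handling the bookkeeping. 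Your reduction of (4) to (3) via the symmetry together with items (1) and (2) is also sound. The only caveat is that the sketch leaves the coherence manipulations in (3) unexpanded, but as you say they are routine instances of naturality and MacLane coherence, so there is no gap.
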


\begin{lemma}
Any weakly pointed category with an initial object $0$ is pointed. Moreover,
$\perp_A = \perp_{I,A}$ and $\perp_{A,B}$ are zero morphisms. 
\end{lemma}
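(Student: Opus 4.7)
My plan has three steps: (1) promote $0$ from initial to terminal, so that it becomes a zero object and the category is pointed in the standard sense; (2) identify $\perp_{A,B}$ with the induced zero morphism $0_{A,B}$; and (3) identify $\perp_A$ with $\perp_{I,A}$ by showing that both equal the zero morphism $0_{I,A}$.

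For step (1), existence of an arrow $A \to 0$ is immediate from the weakly pointed structure, since $\perp_{A,0}$ does the job. For uniqueness, I would apply Proposition~\ref{prop:brauner}(2) to an arbitrary $f : A \to 0$ with target object $C = 0$, which yields $\perp_{0,0} \circ f = \perp_{A,0}$. Since $0$ is initial, its unique endomorphism is $\id_0$, so $\perp_{0,0} = \id_0$ and hence $f = \perp_{A,0}$. This also identifies the unique map $!_A : A \to 0$ with $\perp_{A,0}$, so $0$ is a zero object and the category is pointed.

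For step (2), once $0$ is a zero object, zero morphisms take the form $0_{A,B} = \iota_B \circ\, !_A$, where $\iota_B : 0 \to B$ is the unique arrow out of the initial object. Applying Proposition~\ref{prop:brauner}(1) with $h = \iota_B$ gives $\perp_{A,B} = \iota_B \circ \perp_{A,0} = \iota_B \circ\, !_A = 0_{A,B}$. For step (3), the previous step already yields $\perp_{I,A} = 0_{I,A}$; for $\perp_A$, I would apply the weakly pointed axiom of Definition~\ref{def:weaklypointed} with $h = \iota_A : 0 \to A$, obtaining $\iota_A \circ \perp_0 = \perp_A$. Terminality of $0$ forces $\perp_0 =\, !_I$, so $\perp_A = \iota_A \circ\, !_I = 0_{I,A}$, matching $\perp_{I,A}$.

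I do not anticipate a genuine obstacle: once the uniqueness argument in step (1) is in hand, each remaining claim is a one-line diagram chase combining initiality of $0$ with either the weakly pointed axiom or one of the coherence identities of Proposition~\ref{prop:brauner}. The only subtle observation is in step (1): instantiating Proposition~\ref{prop:brauner}(2) with $C = 0$ and then using that $\id_0$ is the unique endomorphism of the initial object is what collapses any $f : A \to 0$ into the canonical $\perp_{A,0}$.
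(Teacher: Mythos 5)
Your proof is correct, and the paper itself gives no proof of this lemma, so there is nothing to diverge from; your argument is the natural one. The one genuinely non-obvious move --- instantiating Proposition~3.7(2) (the identity $\perp_{B,C}\circ f=\perp_{A,C}$) with target $0$ and using that $\perp_{0,0}=\id_0$ by initiality to force every $f:A\to 0$ to equal $\perp_{A,0}$ --- is exactly what is needed to upgrade $0$ to a zero object, and the remaining identifications $\perp_{A,B}=\iota_B\circ{!_A}$ and $\perp_A=\perp_{I,A}$ then follow by the one-line chases you describe.
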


\begin{theorem}
For every model of the (E)CLNL calculus with recursion,
$\CC$ is a pointed category with
\[  \perp_A = I \xrightarrow{\gamma_I} \Omega_I \xrightarrow{\sigma_{\epsilon_A}} A,\]
where $\Omega_I$ is the carrier of the initial algebra for the $!$-endofunctor.
\end{theorem}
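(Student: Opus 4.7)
The plan is to reduce the theorem to the preceding lemma by showing that the stated family $\{\perp_A\}$ makes $\CC$ \emph{weakly pointed} in the sense of Definition~\ref{def:weaklypointed}; since the CLNL model $\CC$ has the initial object $0$ from its coproduct structure, the lemma then immediately upgrades weakly pointed to pointed, with $\perp_A$ and the induced $\perp_{A,B}$ being zero morphisms.

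The substance of the argument is therefore to check that for every $h:A\to B$ we have $h\circ\perp_A=\perp_B$. Since both $\perp_A$ and $\perp_B$ factor through $\gamma_I: I\to \Omega_I$, it suffices to prove the stronger equality $h\circ\sigma_{\epsilon_A}=\sigma_{\epsilon_B}$ of morphisms $\Omega_I\to B$. I would establish this by the usual uniqueness-of-catamorphism argument for the algebraically compact endofunctor $T_I:=I\otimes !(-)$: by construction, $\sigma_{\epsilon_B}$ is the unique $T_I$-algebra morphism from $(\Omega_I,\omega_I)$ to $(B,\epsilon_B)$ (where $\epsilon_B$ is understood as $I\otimes !B\xrightarrow{\lambda}\ !B\xrightarrow{\epsilon_B}B$), so it is enough to verify that $h\circ\sigma_{\epsilon_A}$ is also such an algebra morphism.

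That verification is a routine diagram chase using three ingredients: the defining equation of $\sigma_{\epsilon_A}$, namely $\sigma_{\epsilon_A}\circ\omega_I=\epsilon_A\circ(\id_I\otimes\, !\sigma_{\epsilon_A})$; the naturality of the counit $\epsilon$, which gives $h\circ\epsilon_A=\epsilon_B\circ(\id_I\otimes\, !h)$; and the functoriality of $\id_I\otimes\, !(-)$. Concatenating these steps yields
\[
h\circ\sigma_{\epsilon_A}\circ\omega_I
=\epsilon_B\circ(\id_I\otimes\, !h)\circ(\id_I\otimes\, !\sigma_{\epsilon_A})
=\epsilon_B\circ\bigl(\id_I\otimes\, !(h\circ\sigma_{\epsilon_A})\bigr),
\]
which is precisely the algebra-morphism equation characterising $\sigma_{\epsilon_B}$, so uniqueness forces $h\circ\sigma_{\epsilon_A}=\sigma_{\epsilon_B}$, and precomposing with $\gamma_I$ gives $h\circ\perp_A=\perp_B$.

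I do not anticipate a serious obstacle. The only mild delicacy is bookkeeping: the endofunctor used in Definition~\ref{def:parametric-alg-cpt} is $I\otimes !(-)$ rather than $!(-)$, so one must carry the $\id_I\otimes(-)$ through the naturality and functoriality steps and implicitly identify $\epsilon_A:\ !A\to A$ with its composite with the left unitor when feeding it into the catamorphism. Once that identification is made, the whole proof is a two-line application of uniqueness followed by a single invocation of the preceding lemma.
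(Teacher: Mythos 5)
Your proposal is correct and follows essentially the same route as the paper: the paper's proof likewise reduces the claim to showing $h\circ\perp_A=\perp_B$ and cites exactly the two ingredients you use, naturality of $\epsilon$ and initiality (uniqueness of the catamorphism $\sigma_{\epsilon}$), with the preceding lemma silently upgrading weak pointedness to pointedness. Your version just spells out the diagram chase and the unitor bookkeeping that the paper leaves implicit.
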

\begin{proof}
It suffices to show for any $h:A \to B$ that $h\ \circ \perp_A = \perp_B$ which
follows from the naturality of $\epsilon$ and initiality of $\sigma_\epsilon.$
\end{proof}
In particular, we have:
$\lrb{\emptyset; \emptyset \vdash \rec\ x^{!A}. \force\ x : A} = \perp_{\lrb A}.$
Thus, the interpretation of the simplest non-terminating program (of any type)
is a zero morphism, as one would expect. Naturally, we use the zero
morphisms of $\CC$ to denote undefinedness in our adequacy result.

Assume that $\CCE$ is $\dcpo$-enriched and that $\perp_{A,B}$ is least in $\CCE(A,B).$
We shall use $\bigvee_{i} a_i$ to denote the supremum of the increasing chain
$(a_i)_{i \in \mathbb N}$.
For any Scott-continuous function $K: \CCE(A,B) \to \CCE(A,B)$, let $K^0 = \perp_{A,B}$
and $K^{i+1} = K(K^i),$ for $i \in \mathbb N$. Then $\bigvee_i
K^i$ is the least fixpoint of $K$. 
Note that $K$ isn't
strict in general.
\begin{lemma}\label{lem:supremum}
Consider an (E)CLNL model with recursion, where $\VV = \dcpo$ and where $\perp_{A,B}$ is least in $\CCE(A,B),$
for all objects $A$ and $B$ (or equivalently $\CCE$ is $\dcpobs$-enriched).
Let $m: \Phi \otimes !A \to A$ be a morphism in $\CC$.
Let $K_m$ be the Scott-continuous function $K_m:\CCE(\Phi, A) \to \CCE(\Phi, A)$ given by
$K_m(f) = m \circ (\id\ \otimes !f) \circ (\id \otimes \blift) \circ \Delta.$ 
Then: 
\[\sigma_m \circ \gamma_\Phi = \bigvee_i K^i_m.\]
\end{lemma}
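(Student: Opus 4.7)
The plan is to approximate $\gamma_\Phi$ by its own Kleene chain in $\CCE(\Phi, \Omega_\Phi)$ and then compose with $\sigma_m$ to recover the Kleene chain of $K_m$. Abbreviate $T(-) := \Phi \otimes !(-)$ and $c := (\id \otimes \blift) \circ \Delta$, so that $K_m(f) = m \circ T(f) \circ c$; by algebraic compactness, $\omega_\Phi : T(\Omega_\Phi) \to \Omega_\Phi$ is the initial $T$-algebra and $\omega_\Phi^{-1}$ is the final $T$-coalgebra, with $\sigma_m$ the catamorphism satisfying $\sigma_m \circ \omega_\Phi = m \circ T(\sigma_m)$ and $\gamma_\Phi$ the anamorphism for $c$ satisfying $\omega_\Phi^{-1} \circ \gamma_\Phi = T(\gamma_\Phi) \circ c$.

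First I would collect the continuity ingredients. By hypothesis $\CCE$ is $\dcpobs$-enriched, so composition and $\otimes$ are Scott-continuous in each argument and Proposition~\ref{prop:brauner} supplies the absorption identities. The functor $! = F \circ G$ is locally continuous because $F \dashv G$ is a $\dcpo$-enriched adjunction, and so is $T$. Next, define the chain $f_0 := \perp_{\Phi, \Omega_\Phi}$ and $f_{i+1} := \omega_\Phi \circ T(f_i) \circ c$; by monotonicity this is increasing. Setting $g := \bigvee_i f_i$, I would verify that $g$ solves the anamorphism equation:
\begin{align*}
\omega_\Phi^{-1} \circ g = \bigvee_i \omega_\Phi^{-1} \circ f_i = \bigvee_{i \geq 0} T(f_i) \circ c = T\bigl(\bigvee_i f_i\bigr) \circ c = T(g) \circ c,
\end{align*}
using Scott-continuity of composition, $\omega_\Phi^{-1} \circ f_0 = \omega_\Phi^{-1} \circ \perp = \perp$ (which is dominated by the rest of the chain and absorbed), the identity $\omega_\Phi^{-1} \circ f_{i+1} = T(f_i) \circ c$, and local continuity of $T$. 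Uniqueness of anamorphisms (from finality of $\omega_\Phi^{-1}$) then forces $\gamma_\Phi = g$.

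Finally I would show $\sigma_m \circ f_i = K_m^i$ by induction on $i$. The base case $\sigma_m \circ f_0 = \sigma_m \circ \perp = \perp_{\Phi, A} = K_m^0$ is again absorption, and for the inductive step the catamorphism law together with functoriality of $T$ give
\[ \sigma_m \circ f_{i+1} = m \circ T(\sigma_m) \circ T(f_i) \circ c = m \circ T(\sigma_m \circ f_i) \circ c = K_m(K_m^i) = K_m^{i+1}. \]
Taking suprema and using Scott-continuity of $\sigma_m \circ -$ then yields $\sigma_m \circ \gamma_\Phi = \bigvee_i \sigma_m \circ f_i = \bigvee_i K_m^i$, as required.

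The main obstacle is the anamorphism approximation step: all the continuity properties must fit together simultaneously, most delicately the interchange $!\bigvee_i f_i = \bigvee_i !f_i$ (requiring local continuity of $!$) and the absorption of $\omega_\Phi^{-1} \circ f_0 = \perp$ into the tail, which makes the identity $\omega_\Phi^{-1} \circ f_{i+1} = T(f_i) \circ c$ applicable uniformly to all indices. Once this approximation is in place, the rest of the argument is routine bookkeeping in the pointed cpo $\CCE(\Phi, A)$.
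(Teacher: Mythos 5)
The paper omits the proof of Lemma~\ref{lem:supremum} entirely (it is elided in the source), so there is no official argument to compare against; judged on its own, your proof is correct and is surely the intended one. You approximate the anamorphism by the Kleene chain of $f \mapsto \omega_\Phi \circ T(f) \circ c$, identify its supremum with $\gamma_\Phi$ via finality of $\omega_\Phi^{-1}$, and push the chain through $\sigma_m$ using the catamorphism law; every continuity and absorption fact you invoke is indeed available from the hypothesis that $\perp_{A,B}$ is least, from Proposition~\ref{prop:brauner}, and from the fact that $!$ and $\otimes$ are underlying functors of $\dcpo$-enriched functors, so no gap remains.
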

The significance of this lemma is that it provides an equivalent semantic definition for
the (rec) rule in terms of least fixpoints, provided we assume order-enrichment for our (E)CLNL models.

For the remainder of the section, we consider only the CLNL calculus which we
interpret in the CLNL model of Theorem~\ref{thm:clnl-model}. Therefore, in what
follows $\CC=\dcpobs$.

\begin{lemma}\label{lem:values-total}
Let $\emptyset \vdash v: P$ be a well-typed value, where $P$ is an intuitionistic type.
Then $\lrb{\emptyset \vdash v :P} \not =\ \perp.$
\end{lemma}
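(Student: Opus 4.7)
The plan is to proceed by structural induction on the well-typed value judgement $\emptyset \vdash v : P$. Recall that in the concrete model of Theorem~\ref{thm:clnl-model}, $\CC = \dcpobs$, the tensor unit is $I = 1_\perp$, and the zero morphism $\perp_{I,\lrb{P}}$ is the constantly-bottom strict map. Writing $t$ for the unique non-bottom element of $I$, it suffices to check that $\lrb{v}(t) \neq \perp_{\lrb{P}}$. Because $P$ is intuitionistic and the context is empty, the possible shapes of $v$ are: $*$, a constant $c$, $\lleft_{A,B}v'$, $\rright_{A,B}v'$, $\langle v',w'\rangle$ with both components at intuitionistic types, or $\lift\ m$; in particular $x$ and $\lambda x^A.m$ are excluded (the latter because $A\multimap B$ is not intuitionistic).

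The base cases are routine. For $v=*$, $\lrb{*}$ is (up to canonical isomorphism) $\id_I$, which sends $t$ to itself. For constants $c$ at intuitionistic types, non-bottomness is a standing assumption on the signature. For $v=\langle v',w'\rangle$, the monoidal product in $\dcpobs$ is the smash product, so $a\otimes b=\perp$ iff $a=\perp$ or $b=\perp$; invoking the inductive hypothesis on $v'$ and $w'$ gives the claim (the diagonal and symmetry isomorphisms appearing in the (pair) rule all collapse to units of the smash product on empty contexts). For $v=\lleft_{A,B}v'$ (and symmetrically $\rright$), coproducts in $\dcpobs$ are coalesced sums, and the left injection preserves non-bottom elements, so the inductive hypothesis on $v'$ suffices.

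The substantive case is $v=\lift\ m$, whose interpretation is the composite $I \xrightarrow{\blift_I} {!}I \xrightarrow{!\lrb{m}} {!}\lrb{A}$. In this concrete model $!A = U(A)_\perp$, i.e., one adjoins a fresh bottom $\perp'$ below $U(A)$; $\blift_I$ is, up to the canonical isos, the arrow $F(\eta_1):1_\perp\to U(1_\perp)_\perp$, so it sends $t$ into the $U(I)$-part of $!I$, which sits strictly above the fresh bottom. Then $!\lrb{m}$ is the strict extension of $U(\lrb{m})$ and carries this element to $\lrb{m}(t)\in U(\lrb{A})\subset{!}\lrb{A}$, which lies in the $U(\lrb{A})$-part of $!\lrb{A}$, hence strictly above $\perp_{!\lrb{A}}$. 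Thus $\lrb{\lift\ m}(t)\neq\perp_{!\lrb{A}}$, even if $\lrb{m}$ itself happens to be the zero morphism.

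This last case is the main obstacle: since $m$ is an arbitrary (possibly divergent) term, we cannot rely on any definedness of $\lrb{m}$, and must instead exploit the explicit description of the $!$-comonad as lifting with a new bottom, which ensures that $\lift$ strictly increases the "degree of definedness." All remaining cases are straightforward applications of the inductive hypothesis together with standard properties of coalesced sums and smash products in $\dcpobs$.
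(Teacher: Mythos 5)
Your proof is correct, and since the paper states Lemma~\ref{lem:values-total} without giving any proof, the natural comparison is to the evidently intended argument, which is exactly your structural induction on values: the only case that does not follow immediately from the induction hypothesis together with the coalesced-sum/smash-product structure of $\dcpobs$ is $\lift\ m$, and you handle it correctly by observing that $\blift_I$ lands in the image part of $!I=U(I)_\perp$ and that $!\lrb{m}=U(\lrb{m})_\perp$ maps that part into the image part of $!\lrb{A}$, which lies strictly above the freshly adjoined bottom regardless of whether $\lrb{m}$ is the zero morphism. The only caveat worth recording is that the constant case rests on an assumption (constants of intuitionistic type denote non-bottom elements) that the paper leaves implicit, since it never specifies the interpretation of constants.
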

Next, we prove adequacy using the standard method based on 
\emph{formal approximation relations}, a notion first devised by Plotkin~\cite{plotkin-85}.
\begin{definition}
For any type $A$, let:
\begin{align*}
V_{A}    &:= \{v\ |\ v\text{ is a value and }\emptyset\vdash v:A\};\\
T_{A} &:= \{m\ |\ \emptyset \vdash m : A\}.
\end{align*}
We define two families of \emph{formal approximation relations:}
\begin{align*}
\tleq_{A} &\subseteq (\CC(I,\sem A)-\{\perp\})\times V_{A}\\
\sleq_{A} &\subseteq \CC(I , \lrb A) \times T_{A}
\end{align*}
by induction on the structure of $A$: 
\begin{itemize}
	\item[(A1)] $f\tleq_{I}*$ iff $f=\id_I$;
	\item[(A2.1)] $f\tleq_{A+B}\text{left }v$ iff $\exists f'.\ f=\text{left}\circ f'$ and $f'\tleq_{A}v$;
	\item[(A2.2)] $f\tleq_{A+B}\text{right }v$ iff $\exists f'.\ f=\text{right}\circ f'$ and $f'\tleq_{B}v$;
	\item[(A3)] $f\tleq_{A\otimes B}\langle v,w\rangle$ iff $\exists f',f'',$ such that:\\
    $f=f'\otimes f'' \circ \lambda_I^{-1}$ and $f'\tleq_{A}v$ and $f''\tleq_{B}w$;
	\item[(A4)] $f\tleq_{A \multimap B} \lambda x.\ m$ iff $\forall f' \in \CC(I, \lrb A),
    \forall v \in V_{A}:$
    \[f' \tleq_{A} v \Rightarrow \text{eval} \circ (f \otimes f')\circ \lambda_I^{-1} \sleq_{B} m[v/x];\]
	\item[(A5)] $f\tleq_{!A} \text{lift } m$ iff $f$ is an intuitionistic morphism and\\
              $\epsilon_A \circ f \sleq_{A} m;$
  \item[(B)] $f \sleq_{A} m \text{ iff } f \not = \perp\ \Rightarrow\ \exists v \in V_{A}.\ m \Downarrow v \text{ and } f \tleq_{A} v.$
\end{itemize}
\end{definition}
So, the relation $\tleq$ relates morphisms to values and $\sleq$ relates morphisms to terms.
\begin{lemma}\label{lem:approximations}
If $f \tleq_{P} v$, where $P$ is an intuitionistic type, then
  $f$ is an intuitionistic morphism.
\end{lemma}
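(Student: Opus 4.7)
The plan is to proceed by induction on the intuitionistic type $P$, which (together with the form of the defining clauses of $\tleq$) dictates the shape of $v$ and hence of $f$. The intuitionistic grammar is $P ::= 0 \mid I \mid P + R \mid P \otimes R \mid\, !A$, so there are five cases.

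For the base cases: If $P = 0$, the claim is vacuous since no closed value has type $0$. If $P = I$, clause (A1) forces $f = \id_I$, which is intuitionistic via $I \cong F(1)$ and $\id_I = F(\id_1)$ up to the canonical isomorphism. If $P =\, !A$, clause (A5) defines the relation so that $f$ is intuitionistic by assumption, so there is nothing to prove.

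For the inductive cases we exploit the fact that $F$ is a strong monoidal left adjoint, together with the canonical isomorphisms $I \cong F(1)$, $F(X) + F(Y) \cong F(X \amalg Y)$, and $F(X) \otimes F(Y) \cong F(X \times Y)$ recalled earlier. Suppose $P = P_1 + P_2$ and $v = \lleft\, w$; by (A2.1) there is $f'$ with $f = \lleft \circ f'$ and $f' \tleq_{P_1} w$. The inductive hypothesis gives $f' = (\lrb{P_1} \xrightarrow{\cong} F(X) \xrightarrow{F(g)} F(Y) \xrightarrow{\cong} \lrb{P_1})$, and the coproduct injection $\lleft : \lrb{P_1} \to \lrb{P_1} + \lrb{P_2}$ corresponds under $F(X) + F(Y') \cong F(X \amalg Y')$ to $F(\iota_1)$, so $\lleft \circ f'$ is intuitionistic; the \emph{right} case is symmetric. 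Suppose $P = P_1 \otimes P_2$ and $v = \langle v_1, v_2\rangle$; by (A3) we have $f = (f_1 \otimes f_2) \circ \lambda_I^{-1}$ with $f_i \tleq_{P_i} v_i$. Inductively each $f_i$ factors as $F(g_i)$ up to the canonical isos, so $f_1 \otimes f_2$ factors as $F(g_1 \times g_2)$ via $F(X_1) \otimes F(X_2) \cong F(X_1 \times X_2)$. The unitor $\lambda_I^{-1}$ is an isomorphism between objects in the image of $F$ (since $I \otimes I \cong F(1) \otimes F(1) \cong F(1 \times 1) \cong F(1) \cong I$), hence intuitionistic, and composition of intuitionistic morphisms is intuitionistic.

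The only mildly delicate point is the bookkeeping needed to see that the various coherence isomorphisms (unitors, and the $F$-isomorphisms for $+$, $\otimes$, $I$) really do compose to exhibit the claimed $F$-factorization; this is routine once one observes that all structure used is either a coproduct injection, a tensor of morphisms, or a canonical iso, each of which lies in the image of $F$ up to canonical isomorphism. I do not expect any genuine obstacle beyond this diagram chasing.
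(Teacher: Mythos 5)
The paper states this lemma without proof, so there is nothing to compare against; your structural induction on $P$ is the evident intended argument and it is correct, with each case discharged by the clauses (A1)--(A5) together with the facts already recalled in Section~2 ($F$ a strong monoidal left adjoint, the canonical isomorphisms $\lrb{P}\cong F(X)$, and the stability of intuitionistic morphisms under composition, tensor, and coproduct injections). The only point worth making explicit in the ``bookkeeping'' is that the canonical isomorphism $\lrb{P}\cong F(X)$ is fixed per type by induction on $P$, so the middle isomorphisms cancel when you compose two intuitionistic morphisms; with that observation the remaining coherence checks are indeed routine.
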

\begin{lemma}\label{lem:scott-induction}
For any $m \in T_{A}$, the property $( - \sleq_{A} m) $ is admissible for the (pointed) cpo $\CCE(I, \lrb A)$ in the sense that Scott fixpoint induction is sound.
\end{lemma}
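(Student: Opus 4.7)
The plan is to verify the two conditions defining admissibility of the predicate $P(f) := (f \sleq_A m)$ on the pointed cpo $\CCE(I, \lrb A)$: (i) $P(\perp)$, and (ii) $P$ is closed under suprema of increasing chains. Condition (i) is immediate from clause (B) because the hypothesis $\perp \neq \perp$ fails, so $\perp \sleq_A m$ holds vacuously. For (ii), let $(f_i)_i$ be an increasing chain with $f_i \sleq_A m$ and set $f := \bigvee_i f_i$. If $f = \perp$, then $f \sleq_A m$ trivially; otherwise, since $\perp$ is the least element of $\CCE(I, \lrb A)$, some $f_{i_0} \neq \perp$, and hence $f_i \neq \perp$ for every $i \geq i_0$. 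By clause (B), each such $f_i$ supplies a value $v_i$ with $m \Downarrow v_i$ and $f_i \tleq_A v_i$; determinism of the big-step relation (its rules are syntax-directed in $m$) forces all these $v_i$ to coincide with a single value $v$, so it suffices to establish $f \tleq_A v$.

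This reduces the lemma to showing that $\tleq_A$ is closed under increasing suprema in $\CCE(I, \lrb A) \setminus \{\perp\}$, at a fixed value $v$. I would prove this by simultaneous induction on the structure of $A$, together with the corresponding admissibility of $\sleq$. The cases $A = I$ and $A = 0$ are trivial (the latter is vacuous since $V_0 = \emptyset$). For $A = B + C$, Scott-continuity of composition together with the fact that the coproduct injections of $\dcpobs$ are order-embeddings lets us lift the factorization $f_i = \lleft \circ f_i'$ (respectively $\rright \circ f_i'$) to the supremum, then invoke the inductive hypothesis on $B$ or $C$. For $A = B \otimes C$, I would exploit that in $\dcpobs$ with the smash tensor every non-$\perp$ morphism $I \to \lrb B \otimes \lrb C$ has a unique non-$\perp$ decomposition, so the component chains $(f_i')$ and $(f_i'')$ are well-defined and increasing, and Scott-continuity of $\otimes$ finishes the argument. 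For $A = B \multimap C$, given any $f' \tleq_B w$, Scott-continuity of $\ev \circ (- \otimes f') \circ \lambda_I^{-1}$ reduces the claim to admissibility of $\sleq_C$, available by the inductive hypothesis on the strictly smaller type $C$. For $A = !B$, the defining conditions are that $f$ is intuitionistic and $\epsilon_{\lrb B} \circ f \sleq_B m'$ (for $\text{lift}\ m'$); the first is preserved by suprema in our concrete model because a non-$\perp$ morphism $I \to !\lrb B$ in $\dcpobs$ is automatically of the form $F(g)$ (the left adjoint $F$ being the lifting functor $(-)_\perp$), and the second follows from Scott-continuity of $\epsilon \circ (-)$ together with the inductive hypothesis on $\sleq_B$.

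The delicate cases will be $A = B \otimes C$ and $A = !B$: abstractly, the decomposition $f = f' \otimes f''$ and the predicate ``is intuitionistic'' are only existentially specified, and the inductive step need not go through in a general CLNL model. The argument therefore relies essentially on the concrete structure of $\dcpobs$ in Theorem~\ref{thm:clnl-model} — unique decomposition of non-$\perp$ morphisms under the smash tensor, and the fact that every non-$\perp$ morphism $I \to !\lrb B$ factors through $F$ — to ensure that the existential data witnessing each $f_i \tleq_A v$ assemble coherently into witnessing data for $\bigvee_i f_i$. Once these points are settled, admissibility of $\sleq_A$ follows from the reduction established in the first paragraph, validating Scott fixpoint induction for the predicate $(- \sleq_A m)$.
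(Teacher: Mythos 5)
Your proposal is correct and follows the same route as the paper's one-line proof: $\perp\ \sleq_A m$ holds vacuously by clause (B), and closure under suprema of increasing chains is shown by induction on the structure of $A$. The paper gives no details of that induction; your case analysis, including the observation that the $\otimes$ and $!$ cases lean on the concrete structure of $\CC = \dcpobs$ (unique decomposition of non-$\perp$ elements under the smash product, and factorization of non-$\perp$ morphisms $I \to\ !\lrb{B}$ through $F$), is a sound elaboration of what the paper leaves implicit.
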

\begin{proof}
One has to show $\perp\ \sleq_A m,$ which is trivial, and also that ${(- \sleq_{A} m)}$ is closed under
suprema of increasing chains of morphisms, which is easily proven by induction on $A$.
\end{proof}

\begin{proposition}\label{prop:adequacy-subst}
Let $\Gamma \vdash m: A,$ where
$\Gamma = x_1: A_1, \ldots, x_n: A_n.$ Let $v_i \in V_{A_i}$ such that
$f_i \tleq_{A_i} v_i.$
If $f$ is the composition:
\begin{align*}
  &f := I \xrightarrow{\cong} I \otimes \cdots \otimes I
  \xrightarrow{f_1 \otimes \cdots \otimes f_n}
  \lrb{\Gamma}
  \xrightarrow{\lrb{\Gamma \vdash m: A}} \lrb A,
\end{align*}
then $f \sleq_{A} m[\overline v\ /\ \overline x].$
\end{proposition}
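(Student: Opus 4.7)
The plan is to proceed by induction on the derivation of $\Gamma \vdash m : A$ and verify for each typing rule that the corresponding composite $f$ satisfies the relation $\sleq_A$ against the substituted term $m[\overline v / \overline x]$. Throughout, I will rely on clause (B) of the definition of $\sleq$, so for each case I may assume that $f \ne \perp$ and must then produce a value $w$ with $m[\overline v / \overline x] \Downarrow w$ and $f \tleq_A w$; if $f = \perp$ the claim is vacuous.

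The base cases are straightforward. For (var), the substitution replaces $x_i$ by $v_i$; the interpretation composes the projection with $f_i$, which by hypothesis satisfies $f_i \tleq_{A_i} v_i$, and $v_i \Downarrow v_i$ delivers the result. The cases (const), ($*$) and (label-free) values are immediate. For the introduction rules (left), (right), (pair), (abs), and (lift), after unfolding the semantic definition one matches directly against clauses (A2.1), (A2.2), (A3), (A4), (A5); the (lift) case requires two remarks: first, the context $\Phi$ is purely intuitionistic so each $f_i$ is intuitionistic by Lemma~\ref{lem:approximations}, ensuring that $f$ itself factors as an intuitionistic morphism; and second, we must invoke the induction hypothesis on the body after observing that $\epsilon \circ f$ agrees with the interpretation of the body up to the standard counit triangle identity.

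The elimination rules (let), (let-pair), (case), (seq), (app), (force) all have the same shape. Assuming $f \ne \perp$, one analyses the interpretation of the head subterm; by the induction hypothesis applied to that subterm and clause (B) it must evaluate to a value of the required shape (e.g. $\lambda x.m'$ for (app), $\lift\ m'$ for (force), $\lleft\ v$ or $\rright\ v$ for (case), etc.), and in each case the approximation relation on the value supplies exactly the data needed to invoke the induction hypothesis on the continuation with an extended list of substituents. The only subtlety is keeping track that the morphisms arising from the copy map $\Delta$ on the intuitionistic portion $\Phi$ of the context remain intuitionistic, which again follows from Lemma~\ref{lem:approximations}.

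The main obstacle, and the only genuinely nontrivial case, is (rec). Here the term is $\rec\ x^{!A}. m$ with $x: !A$ intuitionistic and the interpretation is $\sigma_{\lrb m} \circ \gamma_{\lrb \Phi}$. By Lemma~\ref{lem:supremum}, this equals the least fixpoint $\bigvee_i K^i_{\lrb m}$ of a Scott-continuous operator on $\CCE(\lrb \Phi, \lrb A)$, and precomposition with $f_1 \otimes \cdots \otimes f_n$ turns the claim into one about a supremum of morphisms in $\CCE(I, \lrb A)$. By Lemma~\ref{lem:scott-induction} the predicate $(- \sleq_A \rec\ x^{!A}. m[\overline v / \overline x])$ is admissible, so it suffices to check it at $\perp$ (trivial) and to verify the induction step: if $g \sleq_A \rec\ x^{!A}. m[\overline v / \overline x]$ then $K_{\lrb m}(g)$ precomposed with the $f_i$ also approximates it. For the inductive step one uses the induction hypothesis of the outer induction applied to $m$ with the extended substitution $[\lift\ \rec\ x^{!A}. m[\overline v / \overline x] / x]$: by the case (A5) for $!$-types the hypothesis $g \sleq_A \rec\ x^{!A}. m[\overline v / \overline x]$ and the fact that $g$ composed with $\blift$ is intuitionistic (since $g$ is bracketed by intuitionistic data) supply exactly that $\lift\ \rec\ x^{!A}. m[\overline v / \overline x]$ approximates, which together with the (rec) evaluation rule yields the desired step. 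Admissibility then delivers the supremum, completing the proof.
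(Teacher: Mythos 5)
Your proposal is correct and follows essentially the same route as the paper's own (very terse) proof: induction on the typing derivation, with the (rec) case handled by combining Lemma~\ref{lem:supremum} (to express $\sigma_{\lrb m}\circ\gamma_{\lrb\Phi}$ as a least fixpoint) with the admissibility of $(-\sleq_A m)$ from Lemma~\ref{lem:scott-induction} via Scott fixpoint induction. The remaining cases are routine, exactly as you describe.
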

\begin{proof}
By induction on the derivation of $m.$
For the (rec) case, one should use Lemma~\ref{lem:scott-induction} and
Lemma~\ref{lem:supremum}.
\end{proof}

\begin{definition}
We shall say that a well-typed term $m$ \emph{terminates}, in symbols $m \Downarrow$,
iff there exists a value $v$, such that $m \Downarrow v.$
\end{definition}
The next theorem establishes sufficient conditions for termination at \emph{any} type.
\begin{theorem}[Termination]\label{thm:termination}
Let $\emptyset \vdash m: A\ $ be a well-typed term.
If $\ \lrb{\emptyset \vdash m: A} \not = \perp$, then $m \Downarrow.$
\end{theorem}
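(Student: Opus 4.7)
The plan is to deduce the theorem as an almost immediate corollary of Proposition~\ref{prop:adequacy-subst}, specialised to the empty context, combined with clause (B) of the definition of the formal approximation relation $\sleq_A$.

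Concretely, first I would apply Proposition~\ref{prop:adequacy-subst} to the closed term $\emptyset \vdash m: A$, taking $n = 0$ so there are no variables $x_i$ to substitute for and no $v_i, f_i$ to supply. In that degenerate case, the composition defining $f$ collapses: the empty tensor of morphisms is $\id_I$, and the empty tensor of hom-objects is $I$, so that $f$ is just the morphism
\[
f \;=\; I \xrightarrow{\cong} I \xrightarrow{\lrb{\emptyset \vdash m : A}} \lrb{A},
\]
which is $\lrb{\emptyset \vdash m : A}$ up to the unit isomorphism. The substitution $m[\overline v / \overline x]$ also collapses to $m$ itself. Hence Proposition~\ref{prop:adequacy-subst} yields $\lrb{\emptyset \vdash m : A} \sleq_A m$.

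Next I would unfold clause (B) of the definition of $\sleq_A$ applied to the pair $(\lrb{\emptyset \vdash m : A}, m)$: it says exactly that if $\lrb{\emptyset \vdash m : A} \neq\ \perp$, then there exists a value $v \in V_A$ such that $m \Downarrow v$ and $\lrb{\emptyset \vdash m : A} \tleq_A v$. The hypothesis of the theorem is precisely $\lrb{\emptyset \vdash m: A} \neq\ \perp$, so this gives a value $v$ with $m \Downarrow v$, which is the definition of $m \Downarrow$.

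There is essentially no obstacle in this step, since all of the work has been pushed into Proposition~\ref{prop:adequacy-subst}; the genuinely hard case in that proposition is the (rec) rule, where one has to verify the formal approximation property for a recursive definition. That is handled there by writing the denotation of $\rec\ x^{!A}.m$ as the supremum $\bigvee_i K^i_{\lrb m}$ (Lemma~\ref{lem:supremum}) and then applying Scott fixpoint induction along the admissible predicate $(- \sleq_A m)$ of Lemma~\ref{lem:scott-induction}. Given those two lemmas, the Termination theorem itself is just an unpacking of definitions.
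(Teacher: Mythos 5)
Your proof is correct and follows exactly the paper's argument: specialise Proposition~\ref{prop:adequacy-subst} to $\Gamma = \emptyset$ to obtain $\lrb{\emptyset \vdash m : A} \sleq_A m$, then unfold clause (B) of the definition of $\sleq_A$ to extract a value $v$ with $m \Downarrow v$. The extra remarks about the degenerate empty-context case and about where the real work lives are accurate but not needed beyond what the paper states.
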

\begin{proof}
This is a special case of the previous proposition when $\Gamma = \emptyset$.
We get $\lrb{\emptyset \vdash m: A} \sleq_{A} m,$ and
thus $m \Downarrow$ by definition of $\sleq_{A}$.
\end{proof}

We can now finally state our adequacy result.

\begin{theorem}[Adequacy]\label{thm:adequacy}
Let $\emptyset \vdash m: P$ be a well-typed term, where $P$ is an intuitionistic type. Then:
\[m \Downarrow \text{ iff }\ \ \lrb{\emptyset \vdash m: P} \not = \perp.\]
\end{theorem}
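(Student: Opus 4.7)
The plan is to prove the two directions separately, and both should follow essentially immediately from results already established in the paper; the main work was front-loaded into Theorem~\ref{thm:termination}, Lemma~\ref{lem:values-total}, and the soundness theorem for CLNL (the special case of Theorem~\ref{thm:soundness} noted in the CLNL theorem restriction).

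For the ($\Leftarrow$) direction, I would simply invoke Theorem~\ref{thm:termination}, which asserts exactly that $\lrb{\emptyset \vdash m : A} \not = \perp$ implies $m \Downarrow$ at any type $A$, in particular at the intuitionistic type $P$. So nothing new is required here; the intuitionistic hypothesis is not even used in this direction.

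For the ($\Rightarrow$) direction, suppose $m \Downarrow v$ for some value $v$. By subject reduction, $\emptyset \vdash v : P$, and by the soundness theorem for CLNL with recursion (Theorem~\ref{thm:recursion} applied to Theorem~\ref{thm:soundness}), we have $\lrb{\emptyset \vdash m : P} = \lrb{\emptyset \vdash v : P}$. Since $P$ is intuitionistic and $v$ is a value with $\emptyset \vdash v : P$, Lemma~\ref{lem:values-total} gives $\lrb{\emptyset \vdash v : P} \not = \perp$, and hence $\lrb{\emptyset \vdash m : P} \not = \perp$. This is the direction that genuinely uses the hypothesis that $P$ is intuitionistic: without it, a terminating computation can still yield a value whose denotation is $\perp$ (e.g.\ a value built from $\perp_{A,B}$ at a linear function type), so Lemma~\ref{lem:values-total} would fail and the equivalence would break.

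There is no real obstacle at this point, since the heavy lifting has been absorbed into the formal approximation relation argument of Proposition~\ref{prop:adequacy-subst} and the case analysis of Lemma~\ref{lem:values-total}. If anything, the only point requiring attention is making sure one quotes the CLNL instances of the relevant theorems rather than their ECLNL counterparts, since the statement of adequacy is restricted to the CLNL fragment and the model of Theorem~\ref{thm:clnl-model}.
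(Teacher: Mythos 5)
Your proof is correct and follows exactly the paper's own argument: the right-to-left direction is Theorem~\ref{thm:termination}, and the left-to-right direction combines soundness with Lemma~\ref{lem:values-total} (your explicit appeal to subject reduction to ensure $\emptyset \vdash v : P$ is a detail the paper leaves implicit, but it is the intended reasoning). Nothing further is needed.
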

\begin{proof}
The right-to-left direction follows from Theorem~\ref{thm:termination}.
The other direction follows from soundness and Lemma~\ref{lem:values-total}. 
\end{proof}
The model of Theorem~\ref{thm:clnl-model} was presented as an example by
Benton and Wadler~\cite{benton-wadler} for their LNL calculus extended with
recursion, however without stating an adequacy result.
We have now shown that it is computationally adequate at intuitionistic types
for our CLNL calculus. We also note that the simple proof is 
very similar to the classical proof of adequacy for PCF.
\section{Conclusion and Future Work}\label{sec:conclusion}
We considered the CLNL calculus, which is a variant of Benton's LNL calculus~\cite{benton-small},
and showed that both calculi have the same categorical models. We then showed
the CLNL calculus can be extended with recursion in a simple way while still using the same
categorical model as described by Benton and Wadler~\cite{benton-wadler}. Moreover, the CLNL
calculus also can be extended with language features that turn it into
a lambda calculus for string diagrams, which we named the ECLNL calculus (originally Proto-Quipper-M~\cite{pqm-small}). We
next identified abstract models for ECLNL by considering the categorical
enrichment of LNL models. Our abstract approach allowed us to identify concrete
models that are simpler than those previously considered, and,
moreover, it allowed us to extend the language with general recursion, thereby
solving an open problem posed by Rios and Selinger. The enrichment structure
also made it possible to easily establish the constructivity properties that one would 
expect to hold for a string diagram description language. Finally, we proved an
adequacy result for the CLNL calculus, which is the diagram-free fragment of
the ECLNL calculus.

For future work, we will consider extending ECLNL with dynamic
lifting. In quantum computing, this would allow the language to
execute quantum circuits and then use a measurement outcome to parametrize
subsequent circuit generation. Another line of future work is to consider the
introduction of inductive/recursive datatypes. Our concrete models appear to
have sufficient structure, so we believe this could be achieved in the usual
way. We will also investigate alternative proof strategies for establishing computational adequacy (at intuitionistic
types) for the ECLNL calculus.
Finally, we are interested in extending the language with dependent
types. The original model of Proto-Quipper-M was defined in terms of the
$\mathbf{Fam}(-)$ construction and has the structure of a strict indexed
symmetric monoidal category~\cite{vakar}, which suggests a potential approach
for adding type dependency.

\paragraph{Acknowledgements.}
We thank Francisco Rios and Peter Selinger for many conversations about their
work on Proto-Quipper-M, as well as on our work. We also thank Sam Staton for
raising the question of why the families construction is needed in a model of
Proto-Quipper-M (it isn't). We also thank Samson Abramsky and Mathys Rennela for valuable conversations
about this work during their recent visits to Tulane.
We also thank the Simons Institute for
the Theory of Computing  where much of the initial portion of this work took place.
This work was partially funded by the AFOSR under the MURI grant number FA9550-16-1-0082
entitled, "Semantics, Formal Reasoning, and Tool Support for Quantum
Programming".

\bibliography{refs.bib}

%

\end{document}